\newtheorem{theorem}{\bf Theorem}[section]
\newtheorem{lemma}{\bf Lemma}[section]
\newtheorem{proposition}{\bf Proposition}[section]
\newcommand{\bbint}[2]{\ensuremath{\;\diagdown\!\!\!\!\diagdown\!\!\!\!\!\!\int_{#1}^{#2}}}
\begin{document}

\title[Term by term integration]{The problem of missing terms in term by term integration involving divergent integrals}

\author{Eric A. Galapon}
\address{Theoretical Physics Group, National Institute of Physics, University of the Philippines, Diliman Quezon City, 1101 Philippines}





\begin{abstract}
	Term by term integration may lead to divergent integrals, and naive evaluation of them by means of, say, analytic continuation or by regularization or by the finite part integral may lead to missing terms. Here, under certain analyticity condition, the problem of missing terms for the incomplete Stieltjes transform, $\int_0^a f(x) (\omega+x)^{-1} \mathrm{d}x$, and the Stieltjes transform itself, $\int_0^{\infty} f(x) (\omega+x)^{-1} \mathrm{d}x$, is resolved by lifting the integration in the complex plane.  It is shown that the missing terms arise from the singularities of the complex valued function $f(z) (\omega + z)^{-1}$, with the divergent integrals arising from term by term integration interpreted as finite part integrals. 
\end{abstract}

	
\maketitle 

\section{Introduction}
Term by term integration is one of the frequently used methods in evaluating integrals and in constructing asymptotic expansions of integrals. It typically involves expanding the integrand or a factor of it and then interchanging the order of summation and integration. Of course the order  cannot be arbitrarily interchanged without satisfying certain uniformity conditions. But we, nevertheless, often chivalrously perform the interchange at the cost of introducing divergent integrals in an otherwise well defined integral. We often see this happen in asymptotic evaluation of integrals where the first step in constructing an asymptotic expansion is by deriving a formal expansion, say, by means of an appropriate expansion of the integrand followed by a term be term integration \cite{wong}. The emergence of divergent integrals in such formal manipulation calls for their proper interpretation and treatment. The interpretation of divergent integrals is made complicated by the fact that a divergent integral may be assigned different values corresponding to different ways of interpreting the divergent integral. A divergent integral can be assigned meaningful value by means of analytic continuation \cite{analytic},  Caesaro limits \cite{caesaro}, regularization methods \cite{regularization}, distribution theory \cite{distribution}, or finite part integration \cite{monegato} or still by some other means \cite{others1,others2,others3}. These generally assign different values to the divergent integral. A naive application of any of these interpretations of a divergent  integral in term by term integration can lead to misleading results, sometimes yielding the correct value but more often correctly reproducing only some of the terms of the correct results but completely missing out some group of terms.

The problem with naive term by term integration is exemplified by the asymptotic evaluation of the Stieltjes transform $\int_0^{\infty} h(x)(\omega+x)^{-1}\mathrm{d}x$. The canonical example is given by the integral \cite{wong}
\begin{equation}\label{example}
\int_0^{\infty} \frac{1}{\sqrt[3]{1+x}\, (\omega+x)}\mathrm{d}x  .
\end{equation} 
Inserting the binomial expansion of $(1+x)^{-1/3}$ at infinity and performing term by term integration lead to the infinite series
\begin{equation}
\int_0^{\infty} \frac{1}{\sqrt[3]{1+x}\, (\omega+x)}\mathrm{d}x\leadsto \sum_{s=0}^{\infty} \binom{-1/3}{s} \int_0^{\infty} \frac{x^{-s-1/3}}{(\omega+x)} \mathrm{d}x .
\end{equation} 
Only the $s=0$ term exists and the rest of the integrals do not exist due to the non-integrable singularity at the origin. The question now arises as to how the divergent integrals should be interpreted.

The divergent integrals can be interpreted in several ways. One way is to interpret them via analytic continuation. The idea behind ``analytic continuation is to match the divergent function of interest to an analytic function in a region where the original function is well behaved'' \cite{analytic}. For instance, we have the integral
\begin{equation}\label{analex}
\int_0^{\infty} \frac{x^{-\mu}}{(\omega+x)} \mathrm{d}x = \frac{\pi\csc\pi\mu}{\omega^{\mu}}, \;\;\; 0<\mathrm{Re}\, \mu<1 .
\end{equation}
The integral in the left hand side is defined only in the strip $0<\mathrm{Re}\, \mu <1$, and, in this strip, equal to the right hand side. However, the right hand side is well defined beyond the strip of analyticity of the integral so that it is the (unique) analytic extension of the integral in the entire complex plane. By analytic continuation, the divergent integrals are then assigned the values equal to the values of the right hand side of equation (\ref{analex}) at $\mu=s+1/3$ for every positive integer $s$. Applying this interpretation, term by term integration leads to the expansion
\begin{equation} \frac{2\pi}{\sqrt{3}} \sum_{s=0}^{\infty} (-1)^s \binom{-1/3}{s} \frac{1}{\omega^{s+1/3}}\label{wrong} .
\end{equation} 
However, the given integral (\ref{example}) can be evaluated explicitly and the result expanded for $\omega>1$. The result is given by \cite{distribution,mcwong}
\begin{equation} \frac{2\pi}{\sqrt{3}}\sum_{s=0}^{\infty} (-1)^s \binom{-1/3}{s} \frac{1}{\omega^{s+1/3}} - \sum_{s=1}^{\infty} \frac{3^s (s-1)!}{2\cdot 5 \cdots (3s-1)}\frac{1}{\omega^s},\;\;\; \omega>1\label{correct} .
\end{equation}

Comparing expressions (\ref{wrong}) and (\ref{correct}), we find that expression (\ref{wrong}) correctly reproduces the first group of terms of (\ref{correct}) but completely misses out the second group. This demonstrates how naive term by term integration involving divergent integrals can lead to missing terms. This problem for the Stieltjes transform was first resolved by McClure and Wong by interpreting the Stieltjes transform as a linear functional over some fundamental space of test functions, rendering in the process every expression involved in the expansion of the integrand as a tempered distribution, and assigning interpretation to the divergent integrals as functionals over test functions \cite{mcwong}.  
By establishing the precise relationship among the distributions, McClure and Wong successfully reproduced the missing terms. There is, however, one curious feature of the approach. It requires the evaluation of a Mellin integral transform which may be divergent for all relevant values of the variable of the Mellin transform; and when it happens that the integral transform is divergent, it is assigned the value equal to the analytic extension of the Mellin transform. Moreover, in the intermediate steps leading to the Mellin transform arise divergent integrals which were evaluated by means of analytic continuation. However, it is not transparent why analytic continuation is justifiable in those instances. But since the canonical example discussed above already demonstrates that analytic continuation may fail, a rigorous justification of the assignment of the divergent Mellin integral and the divergent integrals in the intermediate steps by their analytic continuations is desirable. 

In this paper we give term by term integration another look and the interpretation of divergent integrals that accompanies in such operation without appealing to distribution theory and to analytic continuation. Here we will consider the incomplete Stieltjes transform 
\begin{equation}\label{incomplete}
	S_a(\omega)=\int_0^a \frac{ h(x)}{\omega+x}\, \mathrm{d}x,\;\;\; \omega>0 , \;\;\; a>0 ,
\end{equation}
from which the well-known Stieltjes transform can be recovered in the limit as $a\rightarrow\infty$, provided the limit exists.We will consider two expansions of the integrand leading to divergent integrals upon term by term integration, which are
\begin{equation}\label{expand1}
	\frac{1}{\omega+x}=\sum_{j=0}^{\infty} (-1)^j \frac{\omega^j}{x^{j+1}}
	\end{equation}
	\begin{equation}
	\label{expand2}
	h(x)=\sum_{j=0}^{\infty} \frac{c_j}{x^{j+\alpha}}
\end{equation}
Expansion (\ref{expand1}) is appropriate for term by term integration for small $\omega$ or at the origin; and expansion (\ref{expand2}) is for term by term integration for large $\omega$ or at infinity.  Both expansions lead to the family of divergent integrals
\begin{eqnarray}\label{divergent}
	\int_0^a \frac{f(x)}{x^{m +\nu}}\mathrm{d}x, \;\;\; m=1, 2, \dots, \;\; 0\leq \nu <1, \;\; a>0.  
\end{eqnarray}
for some locally integrable function $f(x)$ with $f(0)\neq 0$. Term by term integration involving expansion (\ref{expand2}) has been the area of asymptotics and the area of application of the distributional approach due to McClure and Wong \cite{distribution,mcwong}. On the other hand, it is for the first time that term by term integration involving divergent integrals arising from expansion (\ref{expand1}) is dealt with. 

Here we will approach the problem of dealing with the divergent integrals arising from term by term integration by assigning them values equal to their finite parts. Roughly the finite part of a divergent integral is obtained by temporarily modifying the integral to become convergent, followed by identifying the terms that diverge as the modified integral approaches the divergent integral; the finite part is what remains after dropping the diverging terms. The finite part is conceptually distinct from the value assigned by analytic continuation (or the analytic value in short). However, the analytic value may be equal to the finite part but it may develop poles where the finite part is well-defined \cite{monegato2} so that the finite part is more robust than analytic continuation.

Now a fundamental insight gleaned from the distributional approach of McClure and Wong is that it is not enough to assign an interpretation to a divergent integral without establishing the precise relationship between the assigned values of the divergent integrals to the given convergent integral, which is the Stietljes transform in our case. This applies to our choice of the finite part. The divergent integrals in the above canonical example can be shown to have finite parts equal to their analytic values. Thus the finite part already fails to give the correct result without further consideration. The problem now is to discover the precise rules of using the finite part of a divergent integral in evaluating a convergent integral such as the Stieltjes integral transform. The are two key hints that point us the way to proceed. The first hint is provided by our recent work that saw the representation of the Cauchy principal value and the Hadamard finite part of the generalized principal value by Fox \cite{fox} as complex contour integrals \cite{galapon}. The second hint is the known fact that an integral in the real line can be cast into a contour integral in the complex plane \cite{costin}. Then the emergent divergent integrals (\ref{divergent}) can be interpreted as a finite part integral provided an appropriate contour integral representation of the finite part is obtained, and provided the  original Stieltjes integral is likewise represented as a contour integral consistent with the contour integral representation of the finite part integral. We will find that it is precisely the common contour in the complex contour integral representations of the finite part and the incomplete Stieltjes transform that bridges them and that leads to the meaningful use of the finite part integral.

In this paper, we will obtain the contour integral representation of the finite part of the divergent integrals (\ref{divergent}), from which we read off the required consistent contour integral representation of the incomplete Stieltjes transform. Replacing the integral \ref{incomplete} with its equivalent contour integral representation in the complex plane and performing the term by term integration over this representation will lead us to uniquely identify the divergent integrals that arise as finite part integrals. Moreover, we will find that the missing terms emerge as contributions coming from the complex singularities of the function $h(z)(\omega+z)^{-1}$ (which is the extension of the integrand $h(x)(\omega+x)^{-1}$ in integral (\ref{incomplete}) in the complex plane). Here we will primarily treat term by term integration arising from expansion (\ref{expand1}), and show that the finite part integral in complex contour integral representation leads to an analytic (exact) evaluation of the incomplete Stieltjes transform and the Stieltjes transform in positive powers of $\omega$. We will find, in particular, that the missing term arise from the simple pole of $(\omega+z)^{-1}$.  On the other hand, we will cursorily consider term by term integration arising from expansion (\ref{expand2}), contenting ourselves with an application of our method to the specific example of a generalization of the canonical example, and demonstrating that our approach reproduces a known result of the distributional approach to asymptotics. We will find that, for the given example, the missing terms arise from the branch point of $h(z)$. We will treat elsewhere an in depth comparison of our approach to that of the distributional approach of McClure and Wong.  

Before we proceed, we acknowledge that other methods may be used in obtaining the same expansion for either large or small $\omega$ without the use of divergent integrals. It may even be the case that they are more straightforward than our treatment here. However, it is not our objective to develop a method of evaluating the Stieltjes transform that is deemed superior to all other alternatives. Instead it is our objective to address the fundamental question of the origin and the recovery of missing terms arising from term by term integration of an infinite series that leads to an infinite series whose terms are divergent integrals. This is not just a formal mathematical problem, but it is also a problem whose elucidation may lead to deeper insights into perturbative solutions of physical problems involving infinite series of divergent integrals \cite{analytic,regularization,renormalization}. The case of missing terms arising from a mere assignment of a finite value to a divergent integral in an infinite series rises the specter of missing terms in such a perturbative solution. A method that does not involve divergent integrals will bring little, if there is any, direct insight into the problem of missing terms. But a method that explicitly involves divergent integrals, such as ours here, will deliver new perspective into the use of divergent integrals in such perturbative solutions.

The paper is organized as follows. In Section (\ref{contour}) we obtain the contour integral representation of the finite part of the divergent integrals (\ref{divergent}) for pole ($\nu=0$) and branch point ($\nu\neq 0$) singularities. In Section (\ref{origin}), we perform term by term integration at the origin corresponding to the expansion (\ref{expand1}) for both cases of singularities involving entire integrands; we derive the known series expansions of the exponential function and the incomplete gamma function as examples. In Section (\ref{origin2}), we consider expansion at the origin for holomorphic integrands. In Section (\ref{infinity}) we perform term by term integration at infinity corresponding to the expansion (\ref{expand2}) for a specific class of integrals, the canonical example (\ref{example}) is a special case of which. Through out the paper, by integrable we shall mean Riemann integrable.

\section{Contour integral representation of finite part integrals of divergent integrals with  singularities at the origin}\label{contour}

We now obtain the contour integral representation of the finite part of the divergent integral (\ref{divergent}) under certain conditions to be detailed below. Assuming that the function $f(x)$ is integrable in the interval $[0,a]$, the finite part integral is obtained by replacing the lower limit of integration by some positive $\epsilon<a$. The resulting integral is then grouped in two terms,
\begin{equation}
\int_{\epsilon}^{a} \frac{f(x)}{x^{m+\nu}} \mathrm{d}x = C_{\epsilon} + D_{\epsilon}
\end{equation} 
where $C_{\epsilon}$ is the group of terms that posses a finite value in the limit as $\epsilon\rightarrow 0$, and $D_{\epsilon}$ is the group of terms that diverge in the same limit. The finite part of the divergent integral is obtained by dropping the diverging term $D_{\epsilon}$ and assigning the limit of $C_{\epsilon}$ as the value of the divergent integral, otherwise known as the finite part integral (FPI) \cite{kanwal,hadamard}, that is
\begin{equation}\label{limit1}
\mathrm{FPI}\!\int_0^{a} \frac{f(x)}{x^{m+\nu}}\mathrm{d}x = \lim_{\epsilon\rightarrow 0} C_{\epsilon}.
\end{equation}
Equivalently the finite part integral can be expressed in the form
\begin{equation}\label{limit2}
\mathrm{FPI}\!\int_0^{a} \frac{f(x)}{x^{m+\nu}}\mathrm{d}x  = \lim_{\epsilon\rightarrow 0} \left[\int_{\epsilon}^{a} \frac{f(x)}{x^{m+\nu}} \mathrm{d}x - D_{\epsilon}\right] .
\end{equation}
These different representations of the finite part integral will serve us separate purposes. We will use equation (\ref{limit1}) to compute explicitly the finite part integral later; while we will use equation (\ref{limit2}) to establish the desired contour integral representation of the finite part integral. 

Now given the function $f(x)$, the complex extension of $f(x)$ is the function, $f(z)$, obtained from $f(x)$ itself by replacing the real variable $x$ with the complex variable $z$.  Here, as in \cite{galapon}, we assume that the function $f(x)$ has a complex extension, $f(z)$, that is analytic in some region of the complex plane containing the strip $[0,a]$. We will denote the finite part of the divergent integral (\ref{divergent}) under this assumption by
$\bbint{_0}{a}$. We distinguish this with the more familiar notation $\mathrm{FPI}\int_0^a$ because the function $f(x)$ does not not necessarily have an analytic extension in the complex plane for the finite part integral to exist. However, this is not much of a restriction because the functions $f(x)$ involved in practice have analytic extensions.

The contour integral representation of the divergent integral (\ref{divergent}) will depend on the value of $\nu$. The cases $\nu=0$ (pole singularity) and $\nu\neq 0$ (branch point singularity) are distinct and have to be considered separately. The representation of the finite part integral given by equation (\ref{limit1}) or (\ref{limit2}) by a contour integral in the complex plane means that the FPI is the value of an absolutely convergent integral. This lifts the somewhat vague meaning of the finite part, and endows it with an unambiguous interpretation as a legitimate integral itself.

\subsection{Pole singularity}
\begin{proposition}
	Let $f(x)$ be $(n+1)$ continuously differentiable in the interval $[0,a]$ and $M=\sup\{\left|f^{(n+1)}(x)\right|, \, x\in[0,a]\}<\infty$. If $f(0)\neq 0$, then we have the finite part integral
	\begin{equation}\label{fpi1}
	\mathrm{FPI}\int_0^a \frac{f(x)}{x^{n+1}} \mathrm{d}x = \lim_{\epsilon\rightarrow 0^+}\left[\int_{\epsilon}^a \frac{f(x)}{x^{n+1}} -\sum_{k=0}^{n-1}\frac{f^{(k)}(0)}{k! (n-k)}\frac{1}{\epsilon^{n-k}} + \frac{f^{(n)}(0)}{n!}\ln \epsilon\right] ,
	\end{equation}
	for all $n=0, 1, 2, \dots$.
\end{proposition}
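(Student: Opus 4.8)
The plan is to substitute the Taylor expansion of $f$ about the origin into the truncated integral $\int_\epsilon^a f(x)\,x^{-(n+1)}\,\mathrm{d}x$, isolate the finitely many elementary integrals that blow up as $\epsilon\to 0^+$, and verify that what is left over converges. The collection of divergent terms is then precisely the $D_\epsilon$ of equation \ref{limit2}, so that equation \ref{fpi1} follows at once from the definition of the finite part integral.

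Concretely, since $f\in C^{n+1}[0,a]$, I would first write, for $x\in[0,a]$,
\[
f(x)=\sum_{k=0}^{n}\frac{f^{(k)}(0)}{k!}\,x^{k}+R_n(x),\qquad
R_n(x)=\frac{1}{n!}\int_0^{x}(x-t)^{n} f^{(n+1)}(t)\,\mathrm{d}t ,
\]
so that the hypothesis $M=\sup_{[0,a]}|f^{(n+1)}|<\infty$ gives $|R_n(x)|\le \tfrac{M}{(n+1)!}\,x^{n+1}$ on $[0,a]$. Dividing by $x^{n+1}$ and integrating over $[\epsilon,a]$ splits the expression into three pieces. For $k=0,\dots,n-1$ one has $\int_\epsilon^a x^{k-n-1}\,\mathrm{d}x=(a^{k-n}-\epsilon^{k-n})/(k-n)$, whose $\epsilon$-dependent part is the divergence $((n-k)\,\epsilon^{n-k})^{-1}$; the term $k=n$ gives $\int_\epsilon^a x^{-1}\,\mathrm{d}x=\ln a-\ln\epsilon$, contributing the logarithmic divergence $-\ln\epsilon$; and the remainder contributes $\int_\epsilon^a R_n(x)\,x^{-(n+1)}\,\mathrm{d}x$, whose integrand is bounded by $\tfrac{M}{(n+1)!}$ and continuous on $(0,a]$ (indeed it extends continuously to $x=0$), so this integral converges to $\int_0^a R_n(x)\,x^{-(n+1)}\,\mathrm{d}x$ as $\epsilon\to 0^+$.

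Gathering the divergent pieces gives
\[
D_\epsilon=\sum_{k=0}^{n-1}\frac{f^{(k)}(0)}{k!\,(n-k)}\,\frac{1}{\epsilon^{n-k}}-\frac{f^{(n)}(0)}{n!}\ln\epsilon ,
\]
which contains no constant term and no term vanishing as $\epsilon\to 0^+$; since the functions $\epsilon^{-(n-k)}$, $k=0,\dots,n-1$, together with $\ln\epsilon$ are linearly independent near $0^+$, this is unambiguously ``the group of terms that diverge'' in the sense of Section \ref{contour}. Hence by equations \ref{limit1}--\ref{limit2} the finite part integral equals $\lim_{\epsilon\to 0^+}\bigl(\int_\epsilon^a f(x)x^{-(n+1)}\,\mathrm{d}x-D_\epsilon\bigr)$, which is exactly the bracketed limit in equation \ref{fpi1}; carrying out the limit also yields the closed form
\[
\mathrm{FPI}\!\int_0^a\frac{f(x)}{x^{n+1}}\,\mathrm{d}x
=\sum_{k=0}^{n-1}\frac{f^{(k)}(0)}{k!\,(k-n)}\,a^{k-n}+\frac{f^{(n)}(0)}{n!}\ln a+\int_0^a\frac{R_n(x)}{x^{n+1}}\,\mathrm{d}x ,
\]
with the empty sum understood as $0$ when $n=0$.

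I do not anticipate a genuine obstacle. The only delicate point is the uniform Taylor estimate: it is exactly there that the boundedness of $f^{(n+1)}$ is used, and it is what guarantees that the leftover integral $\int_0^a R_n(x)x^{-(n+1)}\,\mathrm{d}x$ is an ordinary (proper) integral rather than a further divergent object. The hypothesis $f(0)\neq 0$ is not used in the computation itself; it merely certifies that $f(x)x^{-(n+1)}$ has a genuine singularity of order $n+1$ at the origin, so that the labelling of the divergent integral \ref{divergent} is accurate. I note in passing that this computation produces $f^{(k)}(0)/(k!\,(n-k))$ as the coefficient of $\epsilon^{-(n-k)}$ in $D_\epsilon$.
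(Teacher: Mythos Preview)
Your argument is correct and follows essentially the same route as the paper: Taylor expand $f$ to order $n$, integrate each polynomial piece explicitly over $[\epsilon,a]$, bound the remainder by $M x^{n+1}/(n+1)!$ to see that its contribution is an ordinary convergent integral, and read off the divergent part $D_\epsilon$. Your closing remark is also apt: the computation indeed gives the coefficient $f^{(k)}(0)/\bigl(k!\,(n-k)\bigr)$, and the $(n-1)$ appearing in the displayed statement is a typographical slip for $(n-k)$, as the paper's own derivation (equation~\ref{limit}) confirms.
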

\begin{proof}
	For some $\epsilon\in (0,a)$ we consider the convergent integral $\int_{\epsilon}^{a}f(x) x^{-(n+1)} \mathrm{d}x$. By the differentiability of $f(x)$ up to the $(n+1)$-th order we have the expansion about $x=0$,
	\begin{equation}
	f(x)=\sum_{k=0}^{n} \frac{f^{(k)}(0)}{k!} x^k +R_{n+1}(x) .
	\end{equation}
	Inserting this expansion and performing the integral yields
	\begin{eqnarray}
	\int_{\epsilon}^{a} \frac{f(x)}{x^{n+1}}\mathrm{d}x &=& - \sum_{k=0}^{n-1} \frac{f^{(k)}(0)}{k! (n-k)} \left(\frac{1}{a^{n-k}} - \frac{1}{\epsilon^{n-k}}\right)  + \frac{f^{(n)}(0)}{n!} \left(\ln a -\ln \epsilon \right) \nonumber\\
	&& \hspace{24mm} + \int_{\epsilon}^{a}\frac{R_{n+1}(x)}{x^{n+1}} \mathrm{d}x .
	\end{eqnarray}
	
	Given that $|R_{n+1}(x)|\leq M |x|^{n+1}/(n+1)!$, for some constant $M>0$, we obtain the following bound for the integral over the remainder term,
	\begin{eqnarray}
	\left|\int_{\epsilon}^a \frac{R_{n+1}(x)}{x^{n+1}} \mathrm{d}x\right|\leq \int_{\epsilon}^a \frac{\left|R_{n+1}(x)\right|}{x^{n+1}} \mathrm{d}x \leq \frac{M}{(n+1)!} (a-\epsilon) \leq \frac{M a}{(n+1)!}.
	\end{eqnarray}
	Thus the remainder integral exists in the limit as $\epsilon\rightarrow 0$. Collecting all the terms that diverge individually as $\epsilon\rightarrow 0$ in one side, we have
	\begin{eqnarray}
	\int_{\epsilon}^{a} \frac{f(x)}{x^{n+1}}\mathrm{d}x -\sum_{k=0}^{n-1} \frac{f^{(k)}(0)}{k! (n-k)} \frac{1}{\epsilon^{n-k}} + \frac{f^{(n)}(0)}{n!} \ln \epsilon && \nonumber \\
	&&  \hspace{-60mm} = - \sum_{k=0}^{n-1} \frac{f^{(k)}(0)}{k! (n-k)}\frac{1}{a^{n-k}}  + \frac{f^{(n)}(0)}{n!} \ln a  
	 + \int_{\epsilon}^{a}\frac{R_{n+1}(x)}{x^{n+1}} \mathrm{d}x .\label{limit}
	\end{eqnarray}
	The limit of the right hand side of the equation (\ref{limit}) exists as $\epsilon\rightarrow 0$, and so must be the left hand side in the same limit. Finally taking the limit in equation \ref{limit} as $\epsilon\rightarrow 0$ yields the finite part integral  (\ref{fpi1}).
\end{proof}

The second and third terms in the left hand side of equation (\ref{limit}) are exactly the diverging terms that have to be removed from the diverging first term to obtain a finite limit, or the diverging terms that have to be dropped to obtain the Hadamard finite of the integral.

\begin{figure}
	\includegraphics[scale=0.4]{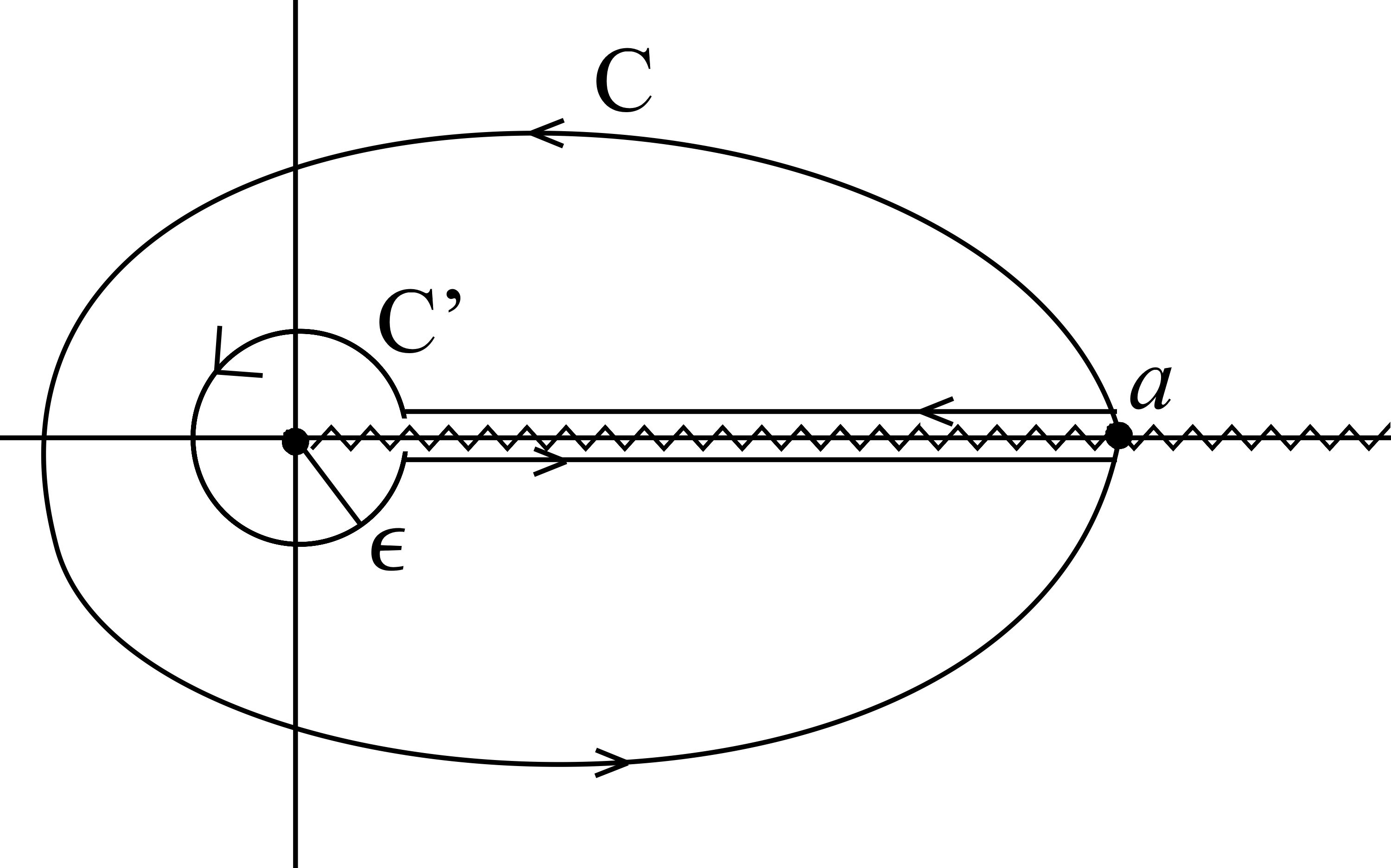}
	\caption{The contour of integration. The contour $\mathrm{C}$ does not enclose any pole of $f(z)$.}
	\label{tear}
\end{figure}

\begin{theorem}\label{prop1}
	Let the complex extension, $f(z)$, of $f(x)$ be analytic in some neighborhood of the interval $[0,a]$. If $f(0)\neq 0$, then 
	\begin{equation}\label{result1}
	\bbint{0}{a}\frac{f(x)}{x^{n+1}}\mathrm{d}x=\frac{1}{2\pi i}\int_{\mathrm{C}} \frac{f(z)}{z^{n+1}} \left(\log z-\pi i\right)\mathrm{d}z, \;\; n=0, 1, \dots 
	\end{equation}
	where $\log z$ is the complex logarithm whose branch cut is the positive real axis and $\mathrm{C}$ is the contour straddling the branch cut of $\log z$ starting from $a$ and ending at $a$ itself, as depicted in Figure-\ref{tear}.
\end{theorem}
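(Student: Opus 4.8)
\medskip
\noindent\emph{Proof proposal.} The plan is to collapse the contour $\mathrm{C}$ onto the slit $[0,a]$ and recover the finite part formula \ref{fpi1}. I may take $\mathrm{C}$ to be the explicit contour $\mathrm{C}_\epsilon$ consisting of the segment from $a$ to $\epsilon$ run just above the positive real axis, the circle $z=\epsilon e^{i\theta}$ with $\theta$ going from $0$ to $2\pi$, and the segment from $\epsilon$ back to $a$ run just below the positive real axis; that the value of the right-hand side of \ref{result1} depends neither on this choice nor on $\epsilon$ follows from Cauchy's theorem, since $g(z):=f(z)z^{-(n+1)}(\log z-\pi i)$ is holomorphic on a neighbourhood of $[0,a]$ with the ray $[0,\infty)$ deleted — and this is exactly where the hypothesis that $\mathrm{C}$ encloses no pole of $f$ enters. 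On $\mathrm{C}_\epsilon$ the selected branch takes the boundary values $\log z=\ln x$ above the cut and $\log z=\ln x+2\pi i$ below it, so that $\log z-\pi i$ equals $\ln x-\pi i$ on the upper segment and $\ln x+\pi i$ on the lower one, while on the circle $\log z-\pi i=\ln\epsilon+i(\theta-\pi)$.

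The two straight segments then combine through the jump $(\ln x+\pi i)-(\ln x-\pi i)=2\pi i$ to produce $2\pi i\int_\epsilon^a f(x)x^{-(n+1)}\,\mathrm{d}x$, so that after division by $2\pi i$ they reproduce the convergent integral inside the bracket of \ref{fpi1}. For the circle I would insert the Taylor expansion $f(\epsilon e^{i\theta})=\sum_{k=0}^{n}\frac{f^{(k)}(0)}{k!}\epsilon^{k}e^{ik\theta}+R_{n+1}(\epsilon e^{i\theta})$, which reduces the circle contribution to $\frac{1}{2\pi}\sum_{k=0}^{n}\frac{f^{(k)}(0)}{k!}\epsilon^{k-n}\int_0^{2\pi}e^{i(k-n)\theta}\big(\ln\epsilon+i(\theta-\pi)\big)\,\mathrm{d}\theta$ plus a remainder. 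The angular integrals are elementary: they equal $2\pi\ln\epsilon$ for $k=n$ — the term $i(\theta-\pi)$ integrating to zero by symmetry about $\theta=\pi$, which is precisely why the $-\pi i$ shift is present in the integrand — and $-2\pi/(n-k)$ for $0\le k\le n-1$; the remainder term is $O(\epsilon|\ln\epsilon|)$ by the analyticity of $f$ near the origin (which gives $|R_{n+1}(z)|=O(|z|^{n+1})$) and hence vanishes as $\epsilon\to0$. This yields the circle contribution $-\sum_{k=0}^{n-1}\frac{f^{(k)}(0)}{k!(n-k)}\,\epsilon^{-(n-k)}+\frac{f^{(n)}(0)}{n!}\ln\epsilon+o(1)$.

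Adding the straight-segment and circle contributions gives
\[
\frac{1}{2\pi i}\int_{\mathrm{C}}\frac{f(z)}{z^{n+1}}(\log z-\pi i)\,\mathrm{d}z=\int_{\epsilon}^{a}\frac{f(x)}{x^{n+1}}\,\mathrm{d}x-\sum_{k=0}^{n-1}\frac{f^{(k)}(0)}{k!(n-k)}\frac{1}{\epsilon^{n-k}}+\frac{f^{(n)}(0)}{n!}\ln\epsilon+o(1),
\]
whose right-hand side is exactly the bracketed expression in \ref{fpi1}. Since the left-hand side is independent of $\epsilon$, letting $\epsilon\to0$ identifies the contour integral with $\mathrm{FPI}\int_0^a f(x)x^{-(n+1)}\,\mathrm{d}x$, which under the present analyticity hypothesis coincides with $\bbint{0}{a}f(x)x^{-(n+1)}\,\mathrm{d}x$; this is the asserted identity \ref{result1}. (When $n=0$ the empty sum is absent and one is left with $\int_\epsilon^a f(x)x^{-1}\,\mathrm{d}x+f(0)\ln\epsilon$, in agreement with \ref{fpi1}.)

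I expect the one genuinely delicate point to be the bookkeeping on the small circle: keeping the branch of $\log z$ consistent as one goes around the origin the long way, so that $\theta$ really sweeps $(0,2\pi)$ and the shift yields $\theta-\pi$, and verifying that the Taylor tail contributes $o(1)$ rather than an $O(1)$ residue. The remaining ingredients — the $2\pi i$ jump across the cut, the elementary angular integrals, and the final matching against \ref{fpi1} — are routine once the branch has been handled correctly, and the passage from $\mathrm{FPI}$ to $\bbint{0}{a}$ is immediate because the analyticity hypothesis already supplies the remainder bound used throughout.
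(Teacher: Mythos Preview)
Your argument is correct and follows the same overall strategy as the paper's proof: deform $\mathrm{C}$ to the keyhole contour, pick up the jump $2\pi i$ across the cut to produce $\int_\epsilon^a f(x)x^{-(n+1)}\,\mathrm{d}x$, expand $f$ in a Taylor polynomial on the small circle, and match the result against Proposition~1. The one place where you differ is in how the $-\pi i$ enters. The paper first evaluates $\frac{1}{2\pi i}\int_{\mathrm{C}}\frac{f(z)}{z^{n+1}}\log z\,\mathrm{d}z$ (without the shift), which leaves behind an extra constant $-\pi\,\frac{f^{(n)}(0)}{n!}$ coming from $\int_0^{2\pi}\theta\,\mathrm{d}\theta=2\pi^2$; it then recognises this constant as $\frac{1}{2i}\int_{\mathrm{C}}\frac{f(z)}{z^{n+1}}\,\mathrm{d}z$ via the Cauchy integral formula and absorbs it into the integrand as the $-\pi i$ term. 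You instead carry $\log z-\pi i$ from the outset and observe that on the circle this becomes $\ln\epsilon+i(\theta-\pi)$, whose imaginary part is odd about $\theta=\pi$ and hence integrates to zero for $k=n$, so no stray constant appears and no appeal to Cauchy's formula is needed. Your route is a little more economical and makes transparent \emph{why} the shift is $-\pi i$; the paper's route has the minor advantage of explaining where the $-\pi i$ comes from if one had not guessed it in advance.
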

\begin{proof}
	Deform the contour $\mathrm{C}$ into the contour $\mathrm{C}'$ shown in Figure-\ref{tear}, and evaluate the integral along that contour. Then
	\begin{equation}
	\int_{\mathrm{C}} \frac{f(z) \log z}{z^{n+1}} \mathrm{d}z=\int_a^{\epsilon} \frac{f(x) \ln x}{x^{n+1}} \mathrm{d}x + \int_{\epsilon}\frac{f(z) \log z}{z^{n+1}} \mathrm{d}z + \int^a_{\epsilon} \frac{f(x) \log\left(x \mathrm{e}^{2\pi i}\right)}{x^{n+1}} \mathrm{d}x \label{ke}
	\end{equation}
	where the first term comes from the upper part of the contour, the second term from the small circle about the origin, and the third term from the lower part of the contour. With $\log(x\mathrm{e}^{2\pi i})= \ln x + 2\pi i$, equation (\ref{ke}) simplifies to
	\begin{equation}\label{integ1}
	\int_{\mathrm{C}} \frac{f(z) \log z}{z^{n+1}} \mathrm{d}z=2\pi i \int_{\epsilon}^{a} \frac{f(x)}{x^{n+1}} \mathrm{d}x + \int_{\epsilon}\frac{f(z) \log z}{z^{n+1}} \mathrm{d}z .
	\end{equation}
	
	We now evaluate the integral along the small circle, the second term in equation \ref{integ1}. Since, by assumption, $f(z)$ is analytic at $z=0$, we have the expansion
	\begin{equation}
	f(z)=\sum_{k=0}^n \frac{f^{(k)}(0)}{k!} z^k + \mathcal{O}(z^{n+1}), \;\; n=0, 1, \dots. 
	\end{equation}
	Substituting this back into the integral and using the parametrization $z=\epsilon \mathrm{e}^{i\theta}$, $0\leq \theta<2\pi$, we obtain
	\begin{eqnarray}
	\int_{\epsilon} \frac{f(z) \log z}{z^{n+1}} \mathrm{d}z &=& \sum_{k=0}^n \frac{f^{(k)}(0)}{k!} \frac{1}{\epsilon^{n-k}} \left[i \ln\epsilon \int_0^{2\pi} \mathrm{e}^{-i (n-k) \theta} \mathrm{d}\theta - \int_0^{2\pi} \mathrm{e}^{-i(n-k)\theta} \theta \mathrm{d}\theta\right]\nonumber \\
	&& \hspace{18mm}+ \mathcal{O}(\epsilon),
	\end{eqnarray}
	where we have used $\log(\epsilon \mathrm{e}^{i\theta}) = \ln\epsilon + i\theta$. The integrals are evaluated with the following integrals
	\begin{equation}
	\int_0^{2\pi} \mathrm{e}^{-i (n-k)\theta}\mathrm{d}\theta=\left\{\begin{array}{cc}
	0 &, n\neq k\\
	2\pi &, n=k
	\end{array}
	\right. ,
	\end{equation}
	\begin{equation}
	\int_0^{2\pi} \mathrm{e}^{-i (n-k)\theta}\theta \mathrm{d}\theta=\left\{\begin{array}{cc}
	\frac{2\pi i}{(n-k)} &, n\neq k\\
	2\pi^2 &, n=k
	\end{array}
	\right. .
	\end{equation}

Substituting the above integrals in equation (\ref{integ1}), the integral along the entire contour $\mathrm{C}$ assume the form
\begin{eqnarray}
\int_{\mathrm{C}} \frac{f(z) \log z}{z^{n+1}} \mathrm{d}z &=& 2\pi i \left[\int_{\epsilon}^{a}\frac{f(x)}{x^{n+1}} \mathrm{d}x -  \sum_{k=0}^{n-1} \frac{f^{(k)}(0)}{k!(n-k)}\frac{1}{\epsilon^{n-k}} + \ln\epsilon \frac{f^{n}(0)}{n!}\right] \nonumber \\
&& \hspace{24mm}- 2 \pi^2 \frac{f^{(n)}(0)}{n!}+ \mathcal{O}(\epsilon) \label{kwe}
\end{eqnarray}
We can already discern the Hadamard finite part from the group of first terms in the limit as $\epsilon\rightarrow 0$. To arrive at the final form, we use the Cauchy integral formula to rewrite the derivative term. Since the contour $\mathrm{C}$ is closed, by the Cauchy integral formula we have
\begin{equation}
f^{(n)}(0)= \frac{n!}{2\pi i} \int_{\mathrm{C}} \frac{f(z)}{z^{n+1}} \mathrm{d}z\label{kwe2}
\end{equation}
We substitute equation \ref{kwe2} back into equation \ref{kwe}, divide the entire expression by $2\pi i$ and then take the limit as $\epsilon\rightarrow 0$ leads us to the result (\ref{result1}).
\end{proof}

\subsection{Branch Point Singularity}
\begin{proposition}
	Let $f(x)$ be $m$ continuously differentiable in the interval $[0,a]$ and $M=\sup\{\left|f^{(m)}(x)\right|, \, x\in[0,a]\}<\infty$. If $f(0)\neq 0$, then we have the finite part integral
	\begin{equation}\label{branchgen}
	\mathrm{FPI}\int_0^a \frac{f(x)}{x^{m+\nu}} \mathrm{d}x = \lim_{\epsilon\rightarrow 0^+}\left[\int_{\epsilon}^{a}\frac{f(x)}{x^{m+\nu}}\mathrm{d}x - \sum_{j=0}^{m-1} \frac{f^{(j)}(0)}{j! (n+\nu-j-1)}\frac{1}{\epsilon^{n+\nu-j-1}}\right]
	\end{equation}
	for all $m=1, 2, \dots$ and $\nu\in(0,1)$.
\end{proposition}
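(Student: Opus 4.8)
The plan is to imitate, nearly line for line, the proof of the first Proposition of this section (the pole case), the single structural difference being that the non-integer exponent $\nu$ prevents any logarithmic term from appearing. Throughout I read the symbol ``$n$'' occurring in the statement as $m$, so that the terms subtracted inside the bracket of \ref{branchgen} are $f^{(j)}(0)\,\epsilon^{-(m+\nu-j-1)}/(j!\,(m+\nu-j-1))$.

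First I would fix $\epsilon\in(0,a)$ and work with the convergent integral $\int_\epsilon^a f(x)\,x^{-(m+\nu)}\,\mathrm{d}x$. Since $f$ is $m$ times continuously differentiable on $[0,a]$, Taylor's theorem with the Lagrange form of the remainder gives
\begin{equation*}
f(x)=\sum_{j=0}^{m-1}\frac{f^{(j)}(0)}{j!}\,x^j+R_m(x),\qquad |R_m(x)|\le\frac{M}{m!}\,|x|^{m},
\end{equation*}
with $M=\sup\{|f^{(m)}(x)|:x\in[0,a]\}$. Substituting this into the integrand and integrating the polynomial part term by term reduces the whole computation to the elementary primitive of $x^{j-m-\nu}$ on $[\epsilon,a]$.

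Here is the one spot where the hypothesis $\nu\in(0,1)$ is essential. For $0\le j\le m-1$ the exponent obeys $j-m-\nu\le -1-\nu<-1$, so it is never equal to $-1$; hence the antiderivative is always a pure power of $x$, never a logarithm, and moreover $m+\nu-j-1>0$ for every such $j$. Consequently
\begin{equation*}
\int_\epsilon^a\frac{x^j}{x^{m+\nu}}\,\mathrm{d}x=-\frac{1}{m+\nu-j-1}\left(\frac{1}{a^{\,m+\nu-j-1}}-\frac{1}{\epsilon^{\,m+\nu-j-1}}\right),
\end{equation*}
which contributes a genuinely divergent piece $\epsilon^{-(m+\nu-j-1)}/(m+\nu-j-1)$ together with a finite piece in $a$. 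This is precisely why, unlike the pole case, the subtracted sum here runs over the full range $j=0,\dots,m-1$ with no exceptional top term and no $\ln\epsilon$.

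For the remainder I would estimate $|R_m(x)|\,x^{-(m+\nu)}\le (M/m!)\,x^{-\nu}$, which is integrable on $(0,a]$ because $\nu<1$; hence $\int_\epsilon^a R_m(x)\,x^{-(m+\nu)}\,\mathrm{d}x$ is bounded uniformly in $\epsilon$ and converges as $\epsilon\to0^+$. Subtracting from both sides the sum of the divergent powers found above leaves an identity whose right-hand side is a finite combination of powers of $a$ plus that convergent remainder integral, so its limit as $\epsilon\to0^+$ exists; hence the same limit exists on the left-hand side, and by the definition of the finite part integral in the form \ref{limit2} it equals \ref{branchgen}. I do not anticipate a genuine obstacle; the only point needing care is the bookkeeping of exponents that rules out a logarithmic term and pins down the subtracted powers, the convergence of the remainder being immediate from $\nu<1$.
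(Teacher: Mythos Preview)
Your proposal is correct and follows essentially the same approach as the paper: expand $f$ in a Taylor polynomial of degree $m-1$ with remainder, observe that the remainder divided by $x^{m+\nu}$ is bounded by a constant times $x^{-\nu}$ and hence integrable on $(0,a]$, and note that the polynomial terms produce only pure power divergences since $\nu\in(0,1)$ keeps every exponent away from $-1$. The paper's own proof is a two-sentence sketch pointing back to the pole case, so your version is in fact more explicit than the original while covering exactly the same ground.
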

\begin{proof}
	The term proportional to $x^m$ in the Taylor expansion of $f(x)$ leads to the integral $\int_{\epsilon}^{a} x^{-\nu} g(x)\mathrm{d}x$ with some $g(x)$ integrable in $[0,a]$. This term has a finite limit as $\epsilon\rightarrow 0$. We then expand $f(x)$ about $x=0$ up to the order $(m-1)$ and then proceed in the same manner as we have done in Theorem-\ref{prop1} to establish the finite part integral (\ref{branchgen}). 
\end{proof}

\begin{theorem}
	Let the complex extension, $f(z)$, of $f(x)$ be analytic in some neighborhood of the interval $[0,a]$. If $f(0)\neq 0$, then
	\begin{equation}\label{branch}
	\bbint{0}{a} \frac{f(x)}{x^{m+\nu}}\mathrm{d}x = \frac{1}{\left(\mathrm{e}^{-2\pi \nu i } -1\right)} \int_{\mathrm{C}} \frac{f(z)}{z^{m+\nu}} \mathrm{d}z, \;\;\; m=1, 2, \dots, \;\;\; 0<\nu<1,
	\end{equation}
	where the branch of $z^{-\nu}$ is such that it is positive on top of the positive real axis  and the contour $\mathrm{C}$ is the contour straddling the branch cut of $z^{-\nu}$ starting from $a$ and ending at $a$ itself, as depicted in Figure-1.
\end{theorem}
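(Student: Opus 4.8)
The plan is to mimic, in a simplified form, the proof of Theorem-\ref{prop1}. The simplification is that for a genuine branch point ($\nu\in(0,1)$) no resonant term can occur, so neither a logarithm nor the Cauchy integral formula enters. First I would deform $\mathrm{C}$ into the keyhole contour $\mathrm{C}'$ of Figure-\ref{tear}: the edge just above the cut running from $a$ to $\epsilon$, the circle $|z|=\epsilon$ traversed from $\theta=0$ to $\theta=2\pi$, and the edge just below the cut running from $\epsilon$ back to $a$. By the choice of branch, $z^{-(m+\nu)}=x^{-(m+\nu)}$ on the upper edge, while $z=x\mathrm{e}^{2\pi i}$ on the lower edge gives $z^{-(m+\nu)}=x^{-(m+\nu)}\mathrm{e}^{-2\pi i(m+\nu)}=x^{-(m+\nu)}\mathrm{e}^{-2\pi i\nu}$. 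Combining the two straight pieces yields
\begin{equation}
\int_{\mathrm{C}}\frac{f(z)}{z^{m+\nu}}\,\mathrm{d}z=\left(\mathrm{e}^{-2\pi i\nu}-1\right)\int_{\epsilon}^{a}\frac{f(x)}{x^{m+\nu}}\,\mathrm{d}x+\int_{\epsilon}\frac{f(z)}{z^{m+\nu}}\,\mathrm{d}z ,
\end{equation}
where the last integral runs over the small circle.

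Next I would evaluate the circular integral using the analyticity of $f$ at the origin, $f(z)=\sum_{k=0}^{m-1}\frac{f^{(k)}(0)}{k!}z^{k}+\mathcal{O}(z^{m})$, together with the parametrization $z=\epsilon\mathrm{e}^{i\theta}$, $0\le\theta<2\pi$. The remainder term contributes $\mathcal{O}(\epsilon^{1-\nu})$, which vanishes as $\epsilon\to 0$ because $\nu<1$. For each $k$ in the finite sum the elementary integral $\int_0^{2\pi}\mathrm{e}^{-i(m+\nu-k-1)\theta}\,\mathrm{d}\theta$ produces the factor $\mathrm{e}^{-2\pi i(m+\nu-k-1)}-1=\mathrm{e}^{-2\pi i\nu}-1$, since only the fractional part $\nu$ of the exponent survives, giving
\begin{equation}
\int_{\epsilon}\frac{f(z)}{z^{m+\nu}}\,\mathrm{d}z=-\left(\mathrm{e}^{-2\pi i\nu}-1\right)\sum_{k=0}^{m-1}\frac{f^{(k)}(0)}{k!\,(m+\nu-k-1)}\frac{1}{\epsilon^{m+\nu-k-1}}+\mathcal{O}(\epsilon^{1-\nu}) .
\end{equation}
In contrast with the pole case, $m+\nu-k-1$ never vanishes for $\nu\in(0,1)$ and $0\le k\le m-1$, which is precisely why no $\log z$ factor appears in the representation \ref{branch}, and why the Cauchy integral formula plays no role here.

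Substituting this back, dividing through by $\mathrm{e}^{-2\pi i\nu}-1$ (nonzero since $0<\nu<1$), and letting $\epsilon\to 0$, I would arrive at
\begin{equation}
\frac{1}{\mathrm{e}^{-2\pi i\nu}-1}\int_{\mathrm{C}}\frac{f(z)}{z^{m+\nu}}\,\mathrm{d}z=\lim_{\epsilon\to 0}\left[\int_{\epsilon}^{a}\frac{f(x)}{x^{m+\nu}}\,\mathrm{d}x-\sum_{k=0}^{m-1}\frac{f^{(k)}(0)}{k!\,(m+\nu-k-1)}\frac{1}{\epsilon^{m+\nu-k-1}}\right] ,
\end{equation}
and the right-hand side is exactly the finite part integral supplied by the preceding Proposition, equation \ref{branchgen}, hence equal to $\bbint{0}{a}f(x)x^{-(m+\nu)}\,\mathrm{d}x$ under the stated analyticity hypothesis. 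The only point that needs genuine care is the consistent bookkeeping of the phase $\mathrm{e}^{-2\pi i\nu}$ along the lower edge of the cut and inside the circular integral, so that the common factor $\mathrm{e}^{-2\pi i\nu}-1$ cancels cleanly against the prefactor of \ref{branch} and the circle integral reproduces \emph{precisely} the subtraction terms of the Proposition; everything else is routine estimation.
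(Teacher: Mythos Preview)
Your proposal is correct and follows exactly the approach the paper indicates: the paper's own proof simply reads ``We proceed in the same manner as we did in Theorem~\ref{prop1} to establish the equality~\ref{branch},'' and what you have written is precisely that computation carried out in detail. In fact you supply more than the paper does, correctly noting why no logarithm or Cauchy integral formula appears (the exponent $m+\nu-k-1$ never vanishes for $0<\nu<1$), and your bookkeeping of the phase and the $\mathcal{O}(\epsilon^{1-\nu})$ remainder is accurate.
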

\begin{proof} 
	We proceed in the same manner as we did in Theorem \ref{prop1} to establish the equality \ref{branch}.
\end{proof}

\subsection{Example} Let us compute the finite part integral of the divergent integral $\int_0^a x^{-n-\nu} \mathrm{d}x$ for $n=1,2,\dots$, $0<\nu<1$ and $a>0$. For $\epsilon>0$, we have
\begin{equation}
\int_{\epsilon}^{a}\frac{1}{x^{n+\nu}}\mathrm{d}x = -\frac{1}{(n+\nu-1)} \frac{1}{a^{n+\nu-1}}+ \frac{1}{(n+\nu-1)} \frac{1}{\epsilon^{n+\nu-1}} .
\end{equation}
The second term diverges as $\epsilon\rightarrow 0$ so we drop it, leaving only the first term to give the finite part integral
\begin{equation}
\mathrm{FPI}\!\int_{\epsilon}^{a}\frac{1}{x^{n+\nu}}\mathrm{d}x = -\frac{1}{(n+\nu-1)}\frac{1}{a^{n+\nu-1}} .
\end{equation}
This exactly the same result obtained using equation (\ref{branchgen}). To use the integral representation, we choose the unit circle centered at the origin as our contour of integration. With the parametrization $z=a \mathrm{e}^{i\theta}$, we obtain
\begin{eqnarray}
\mathrm{FPI}\!\int_0^a \frac{1}{x^{n+\nu}}\, \mathrm{d}x &=& \frac{1}{\left(\mathrm{e}^{-2\pi \nu i}-1\right)} \int_0^{2\pi} \frac{1}{a^{n+\nu}\mathrm{e}^{i (n+\nu)\theta}} i a \mathrm{e}^{i\theta} \mathrm{d}\theta \nonumber \\
&=& -\frac{1}{(n+\nu-1)}\frac{1}{a^{n+\nu-1}} . \nonumber
\end{eqnarray}
Any contour continuously deformable to the circle will yield the same value of the finite part integral.

\subsection{Remark}
It is imposed above that the function $f(x)$ does not vanish at the origin so that the origin is necessarily non-integrable. However, it may happen that $f(x)$ vanishes there, in particular, $f(x)=x^m g(x)$, where $g(0)\neq 0$, for some positive integer $m$. Then the integral $\int_0^a f(x) x^{-m-\nu} \mathrm{d}x$   exists for all $0\leq \nu<1$, and its value is given by $\int_0^a g(x) x^{-\nu} \mathrm{d}x$. Provided the complex extension function $g(z)$ has the desired analytic properties, it is straightforward to show that equations (\ref{result1}) and (\ref{branch}) yield the respective values $\int_0^a g(x)\mathrm{d}x$ (for $\nu=0$) and $\int_0^a g(x) x^{-\nu}\mathrm{d}x$ (for $0<\nu<1$). That is the finite part integral and the regular integral (e.g. Riemann integral) coincide when the given integral $\int_0^a f(x) x^{-m-\nu} \mathrm{d}x$ converges absolutely.  This is expected as the finite part integral of an absolutely convergent integral is just the value of the integral itself, i.e. the integral has no diverging component. This observation allows us to remove the restriction $f(0)\neq 0$ and interpret all integrals that appear as finite part integral wherever appropriate.

\section{Term by term integration at the origin for functions with entire extensions}\label{origin}

\begin{figure}
	\includegraphics[scale=0.4]{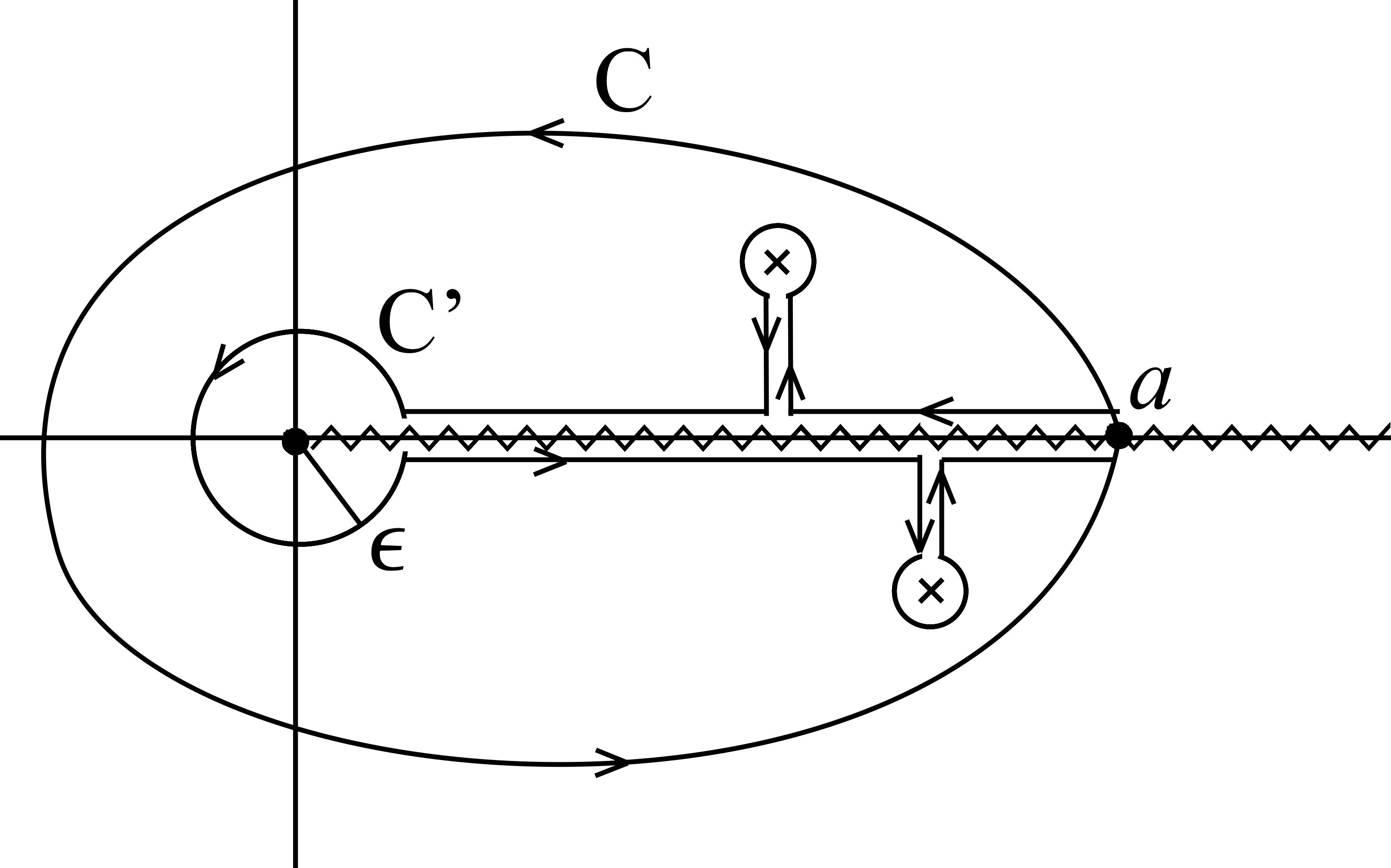}
	\caption{The contour of integration in the contour representation of a real integral. The contour $\mathrm{C}$ encloses some or all the poles of the integrand.}
	\label{fig:boat1}
\end{figure}

Here and in the succeeding section, we will consider the term by term integration of the incomplete Stieltjes transform  
\begin{equation}\label{incom}
\int_0^a \frac{ x^{-\nu} f(x)}{ \omega+x}\mathrm{d}x, \;\;\; 0\leq \nu<1, 
\end{equation}
  in the neighborhood of $\omega=0$. The overarching idea behind our approach to the missing terms is to obtain a contour integral representation of the integral (\ref{incom}) that is consistent with the contour integral representation of the finite part integral. By consistent we mean that the same contour used in the contour integral representation of the finite part integral is the same contour used in the contour integral representation of the incomplete Stieltjes transform. The representation will be dictated by the value of $\nu$, either $\nu=0$ or $\nu\neq 0$, and the analytic behavior of $f(z)$ in the complex plane. In this section, we will consider the case in which $f(x)$ has an entire complex extension $f(z)$. 

\subsection{Case 1:} $\nu =0$
\begin{lemma} \label{lemma0}
	Let $g(x)$ be integrable in the interval $[0,a]$ and that its complex extension $g(z)$ is analytic in a region containing the interval $[0,a]$ in its interior and holomorphic elsewhere. Then
	\begin{equation}\label{contourep1}
	\int_0^a g(x) \, \mathrm{d}x = \frac{1}{2\pi i} \int_{\mathrm{C}} g(z) \log z \, \mathrm{d}z - \sum_k \mathrm{Res}\left[\log z \, g(z)\right]_{z_k},
	\end{equation}
	where the branch cut of $\log z$ is chosen to be the positive real axis and $\mathrm{C}$ is the contour straddling the branch cut of $\log z$ starting from $a$ and ending at $a$ itself, as depicted in Figure-2, and the $z_k$'s are the poles of $g(z)$ enclosed by $\mathrm{C}$, with no pole of $g(z)$ lying along $\mathrm{C}$.
\end{lemma}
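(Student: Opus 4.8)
The plan is to reuse the contour-collapsing mechanism already used in the proof of Theorem~\ref{prop1}, the only new feature being that $g(z)$ now carries poles and the residue theorem therefore contributes an extra term. First I would introduce the collapsed contour $\mathrm{C}'$ built from three pieces: a segment lying immediately above the positive real axis running from $a$ to a small $\epsilon>0$, a circle $\gamma_\epsilon$ of radius $\epsilon$ centered at the origin, and a segment lying immediately below the real axis running from $\epsilon$ back to $a$. By construction $\mathrm{C}'$ encloses none of the poles $z_k$ of $g(z)$, and since $g(z)\log z$ is holomorphic throughout the region swept out during the deformation of $\mathrm{C}$ to $\mathrm{C}'$ except at the $z_k$ (none of which lies on $\mathrm{C}$), the residue theorem gives
\begin{equation}
\int_{\mathrm{C}} g(z)\log z\,\mathrm{d}z = \int_{\mathrm{C}'} g(z)\log z\,\mathrm{d}z + 2\pi i \sum_k \mathrm{Res}\left[g(z)\log z\right]_{z_k} ,
\end{equation}
with the sign of the residue sum fixed, as in Figure~\ref{fig:boat1}, by the orientation convention inherited from Theorem~\ref{prop1}.

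Next I would evaluate $\int_{\mathrm{C}'} g(z)\log z\,\mathrm{d}z$ exactly as in equations~\ref{ke}--\ref{integ1}. On the upper segment $\log z = \ln x$ and on the lower segment $\log z = \ln x + 2\pi i$, so the two rectilinear contributions combine to $2\pi i \int_{\epsilon}^{a} g(x)\,\mathrm{d}x$, the $\ln x$ parts cancelling. For the circular piece, analyticity of $g$ on a neighborhood of $[0,a]$ yields a bound $|g(z)|\leq C$ for $|z|\leq\epsilon$, whence $\left|\int_{\gamma_\epsilon} g(z)\log z\,\mathrm{d}z\right|\leq 2\pi C\,\epsilon\,(|\ln\epsilon|+2\pi)\to 0$ as $\epsilon\to 0^+$. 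Since $g$ is moreover bounded on the compact set $[0,a]$, the integral $\int_0^a g(x)\,\mathrm{d}x$ converges, and letting $\epsilon\to 0^+$ gives $\int_{\mathrm{C}'} g(z)\log z\,\mathrm{d}z \to 2\pi i \int_0^a g(x)\,\mathrm{d}x$. Combining this with the displayed identity and dividing by $2\pi i$ produces \ref{contourep1}.

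The routine computations above are essentially identical to those already carried out for Theorem~\ref{prop1}, so the step I expect to need the most care is the justification of the deformation $\mathrm{C}\to\mathrm{C}'$: one has to verify that the annular region between the two contours meets the domain of $g$ in a set whose only singularities are the finitely many enclosed poles $z_k$, that no pole lies on $\mathrm{C}$ itself (this is assumed), and that the branch cut of $\log z$ is merely approached from its two sides and never actually crossed --- which is precisely why the straddling contour of Figure~\ref{fig:boat1} is the right object to work with. A minor additional point is to confirm that each $z_k$ lies off the branch cut, so that $\log z$, and hence $\mathrm{Res}\left[g(z)\log z\right]_{z_k}$, is unambiguously defined there.
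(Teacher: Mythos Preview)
Your proposal is correct and follows essentially the same approach as the paper: deform $\mathrm{C}$ to the keyhole contour $\mathrm{C}'$ of Figure~\ref{fig:boat1}, collect the residues at the enclosed poles, evaluate the two rectilinear pieces across the cut to get $2\pi i\int_\epsilon^a g(x)\,\mathrm{d}x$, and let $\epsilon\to 0$. The paper's own proof is a one-line sketch of exactly this procedure, so your write-up simply supplies the details the author leaves implicit.
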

\begin{proof} 
	Given the contour integral $\int_{\mathrm{C}} g(z) \log z \, \mathrm{d}z$, deform the contour $\mathrm{C}$ into the branch cut by means of the contour $\mathrm{C}'$ in Figure-2, eventually taking the limit $\epsilon\rightarrow 0$.
\end{proof}

We will find that the first term of equation (\ref{contourep1}) yields the result of naive term by term integration; and the second term, the contribution coming from the poles, is the origin of the missing terms. In general, the integral representation of the incomplete Stieltjes transform relevant to us is of the form similar to equation (\ref{contourep1}). In particular, the representation will have a contour integral term which accounts for the term by term integration, and a term arising from contributions of the poles or branch points of the function $g(z)$, which are the missing terms. Here and in the following section, we will consider missing terms coming from pole singularities; later we will give an example of missing terms arising from a branch point. 

\begin{theorem} Let the complex extension, $f(z)$, of the function $f(x)$ be entire, then
	\begin{equation}\label{stel1}
	\int_0^a \frac{f(x)}{\omega+x}\, \mathrm{d}x =  \sum_{j=0}^{\infty} (-1)^j \omega^j \;\bbint{0}{a} \frac{f(x)}{x^{j+1}}\, \mathrm{d}x -f(-\omega) \ln\omega , \;\;\; 0< \omega<a .
	\end{equation}
\end{theorem}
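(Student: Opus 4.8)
The plan is to start from the contour integral representation of Lemma \ref{lemma0} applied to the function $g(z) = f(z)(\omega+z)^{-1}$, which is meromorphic with a single simple pole at $z=-\omega$ when $f$ is entire. Since $\omega < a$ and $\omega>0$, the point $z=-\omega$ lies inside the contour $\mathrm{C}$ of Figure-2 (it is a negative real number of modulus less than $a$, hence enclosed by the keyhole contour once we observe that $\mathrm{C}$ straddles the positive real axis and wraps around the disk of radius $a$). Thus Lemma \ref{lemma0} gives
\begin{equation}
\int_0^a \frac{f(x)}{\omega+x}\,\mathrm{d}x = \frac{1}{2\pi i}\int_{\mathrm{C}} \frac{f(z)\log z}{\omega+z}\,\mathrm{d}z - \mathrm{Res}\!\left[\frac{f(z)\log z}{\omega+z}\right]_{z=-\omega},
\end{equation}
and the residue term is immediately $f(-\omega)\log(-\omega)$. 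With the branch of $\log z$ cut along the positive real axis so that $\log z = \ln|z| + i\arg z$ with $\arg z \in (0,2\pi)$, one has $\log(-\omega) = \ln\omega + i\pi$. So far the $i\pi$ is spurious; I expect it to be cancelled by the corresponding piece in the contour integral, which is the reason the combination $\log z - \pi i$ appears in Theorem \ref{prop1}.

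Next I would expand the Cauchy kernel. On the contour $\mathrm{C}$, every point has modulus at least $\min(\omega, \text{something})$ — more precisely, since $\mathrm{C}$ can be deformed to lie at modulus close to $a > \omega$ away from a small neighborhood of the origin, and the origin itself contributes nothing because $f$ is entire (so $f(z)(\omega+z)^{-1}\log z$ is integrable near $0$ after the deformation argument in Lemma \ref{lemma0}), we may use the geometric expansion $(\omega+z)^{-1} = \sum_{j=0}^\infty (-1)^j \omega^j z^{-j-1}$, valid for $|z|>\omega$. The key point is to justify interchanging this sum with the contour integral; this is where uniform convergence on (a suitable deformation of) $\mathrm{C}$ is needed, and since $\mathrm{C}$ stays at modulus bounded below by some $\rho$ with $\omega < \rho$, the geometric series converges uniformly there, legitimizing the interchange. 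This yields
\begin{equation}
\frac{1}{2\pi i}\int_{\mathrm{C}} \frac{f(z)\log z}{\omega+z}\,\mathrm{d}z = \sum_{j=0}^\infty (-1)^j \omega^j \cdot \frac{1}{2\pi i}\int_{\mathrm{C}} \frac{f(z)\log z}{z^{j+1}}\,\mathrm{d}z.
\end{equation}

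Now I would invoke Theorem \ref{prop1}. Each integral $\frac{1}{2\pi i}\int_{\mathrm{C}} f(z) z^{-j-1}\log z\,\mathrm{d}z$ is not quite the finite part integral $\bbint{0}{a} f(x)x^{-j-1}\,\mathrm{d}x$; rather, by \eqref{result1}, the finite part equals $\frac{1}{2\pi i}\int_{\mathrm{C}} f(z)z^{-j-1}(\log z - \pi i)\,\mathrm{d}z$. So I must add and subtract the $\pi i$ term:
\begin{equation}
\frac{1}{2\pi i}\int_{\mathrm{C}} \frac{f(z)\log z}{z^{j+1}}\,\mathrm{d}z = \bbint{0}{a}\frac{f(x)}{x^{j+1}}\,\mathrm{d}x + \frac{\pi i}{2\pi i}\int_{\mathrm{C}} \frac{f(z)}{z^{j+1}}\,\mathrm{d}z = \bbint{0}{a}\frac{f(x)}{x^{j+1}}\,\mathrm{d}x + \frac{1}{2}\,\frac{f^{(j)}(0)}{j!}\cdot 2\pi i \cdot \frac{1}{2\pi i},
\end{equation}
where the last step uses the Cauchy integral formula \eqref{kwe2}; thus the correction is $\tfrac12 f^{(j)}(0)/j!$. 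Wait — I need to be careful: this would produce an extra $\sum_j (-1)^j\omega^j \tfrac12 f^{(j)}(0)/j! \cdot (\text{something})$; actually re-expanding, the $\pi i$ contributions across all $j$ resum via $\sum_j (-1)^j\omega^j f^{(j)}(0)/j! = f(-\omega)$ (Taylor series of the entire function $f$ at $0$, convergent since $\omega$ is finite and $f$ is entire), giving a term $\tfrac{1}{2}f(-\omega)\cdot\frac{\pi i}{\pi i}$ — let me just say: the aggregated $\pi i$ terms resum to $f(-\omega)\cdot(\text{a multiple of }i\pi)/(2\pi i)$, and this is precisely what cancels the spurious $i\pi$ inside $\log(-\omega) = \ln\omega + i\pi$ coming from the residue. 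After this cancellation, the residue contributes exactly $-f(-\omega)\ln\omega$, and the surviving contour-integral pieces assemble into $\sum_{j=0}^\infty (-1)^j\omega^j \bbint{0}{a} f(x)x^{-j-1}\,\mathrm{d}x$, which is \eqref{stel1}.

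The main obstacle I anticipate is bookkeeping the $i\pi$ terms: one must verify that the $\pi i$ subtracted in \eqref{result1} to define the finite part, summed against $\sum_j(-1)^j\omega^j f^{(j)}(0)/j! = f(-\omega)$, exactly cancels the imaginary part of $\log(-\omega)$ from the residue, leaving the real combination $-f(-\omega)\ln\omega$. A secondary technical point is justifying the term-by-term interchange of the geometric series with the contour integral and confirming convergence of the resulting series of finite part integrals for $\omega < a$ — both follow from deforming $\mathrm{C}$ to stay at modulus bounded away from $0$ and below $a$ but strictly above $\omega$, together with the growth control on $f$ supplied by entirety on compact sets. One also needs that $z=-\omega$ is genuinely enclosed by $\mathrm{C}$, which holds because $0<\omega<a$.
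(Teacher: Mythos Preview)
Your approach is essentially the paper's: start from Lemma~\ref{lemma0}, pick up the residue $f(-\omega)(\ln\omega+i\pi)$ at $z=-\omega$, expand $(\omega+z)^{-1}$ geometrically, identify each contour integral as a finite part via Theorem~\ref{prop1}, and check that the $i\pi$ corrections cancel the imaginary part of the residue. The two organizational differences are minor. First, the paper uses the \emph{finite} expansion \eqref{expansio} with an explicit remainder and shows $R_n(\omega)\to 0$ by deforming $\mathrm{C}$ to the circle $|z|=a$; you instead take the full geometric series and argue uniform convergence on that circle --- these are the same estimate (the paper's bound \eqref{bound} is precisely the tail bound for your series). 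Second, for the $i\pi$ cancellation the paper resums $\sum_{j}(-1)^j\omega^j z^{-j-1}$ back to $(\omega+z)^{-1}$ and reads off the residue $f(-\omega)$, whereas you use Cauchy's formula termwise to get $\sum_j(-\omega)^j f^{(j)}(0)/j!=f(-\omega)$; both yield the same correction $i\pi f(-\omega)$, which cancels $-i\pi f(-\omega)$ from the residue. One clarification: the contour $\mathrm{C}$ can be taken to be the full circle $|z|=a$ (starting and ending at $a$ on opposite sides of the cut), so there is no need to worry about a ``small neighborhood of the origin'' --- the entire contour sits at modulus $a>\omega$ and uniform convergence is immediate. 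Your intermediate coefficient ``$\tfrac12\,f^{(j)}(0)/j!$'' is a slip; the correct extra term from $\frac{1}{2\pi i}\int_{\mathrm{C}} f(z)\,i\pi\,z^{-j-1}\,\mathrm{d}z$ is $i\pi\,f^{(j)}(0)/j!$, which is exactly what you need for the cancellation you describe.
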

\begin{proof}
	With $f(z)$ entire, the complex valued function $f(z)(\omega+z)^{-1}$ has a simple pole at $z=-\omega$. Then equation (\ref{contourep1}) leads to the following contour integral representation of the incomplete Stieltjes transform,
	\begin{equation}\label{integral1}
	\int_0^a \frac{f(x)}{\omega+x} \mathrm{d}x=-f(-\omega) \left(\ln\omega+ i\pi\right) + \frac{1}{2\pi i} \int_{\mathrm{C}} \frac{f(z) \log z}{\omega + z} \mathrm{d}z
	\end{equation}
	For fixed $\omega$ and $z\neq -\omega$, we have the expansion
	\begin{equation}\label{expansio}
	\frac{1}{\omega+z}= \sum_{j=0}^{n-1} \frac{(-1)^j \omega^j}{z^{j+1}} + \frac{(-1)^n \omega^n}{z^n (\omega+z)} , \;\;\; n=1, 2, \dots .
	\end{equation}
	Substituting this expansion back into the integral in equation (\ref{integral1}) yields
	\begin{eqnarray}
	\int_0^a \frac{f(x)}{\omega+x} \mathrm{d}x&=&-f(-\omega) \left(\ln\omega+ i\pi\right) +\sum_{j=0}^{n-1}(-1)^j \omega^j \frac{1}{2\pi i} \int_{\mathrm{C}} \frac{f(z) \log z}{z^{j+1}} \mathrm{d}z \nonumber \\
	&&\hspace{12mm} + (-1)^n \omega^n \frac{1}{2\pi i} \int_{\mathrm{C}} \frac{f(z) \log z}{z^n(\omega + z)} \mathrm{d}z\label{cow}
	\end{eqnarray}
 
 We now make the trivial replacement $\log z = \log z - i\pi + i\pi$ in the second term of equation (\ref{cow}),
 \begin{eqnarray}
 \int_0^a \frac{f(x)}{\omega+x} \mathrm{d}x&=&-f(-\omega) \left(\ln\omega+ i\pi\right) +\sum_{j=0}^{n-1}(-1)^j \omega^j \frac{1}{2\pi i} \int_{\mathrm{C}} \frac{f(z) }{z^{j+1}} (\log z - i\pi) \mathrm{d}z \nonumber \\
 && \hspace{-10mm}+ i \pi \frac{1}{2\pi i} \int_{\mathrm{C}} f(z)\left[ \sum_{j=0}^{n-1}(-1)^j    \frac{ \omega^j}{z^{j+1}} \right] \mathrm{d}z  + (-1)^n \omega^n \frac{1}{2\pi i} \int_{\mathrm{C}} \frac{f(z) \log z}{z^n(\omega + z)} \mathrm{d}z .
 \end{eqnarray}
 We recognize that the contour integral in the second term is just the finite part integral of the divergent integral $\int_0^a f(x) x^{-j-1}\mathrm{d}x$. We replace the summation in the third term with equation (\ref{expansio}). A residue term will emerge in the process which will cancel out the imaginary term in the first term. We arrive at the expression
 \begin{equation}
 \int_0^a \frac{f(x)}{\omega+x} \mathrm{d}x=-f(-\omega) \ln \omega  +\sum_{j=0}^{n-1}(-1)^j \omega^j \bbint{0}{a}  \frac{f(x)}{x^{j+1}} \mathrm{d}x + R_n(\omega) 
 \end{equation}
 where the remainder term is given by
 \begin{equation}
 R_n(\omega)=(-1)^n \omega^n \frac{1}{2\pi i} \int_{\mathrm{C}} \frac{f(z) }{z^n(\omega + z)} (\log z-i\pi) \mathrm{d}z
 \end{equation}

Despite appearances, the remainder is not a finite part integral because of the presence of the pole enclosed by the contour.	We estimate the remainder by deforming the contour into a circle centered at the origin with radius $a$. Then we have the bound
\begin{eqnarray}
\left|R_n(\omega)\right|&\leq& \frac{\omega^n}{a^n} \int_0^{2\pi}  \frac{|f(a \mathrm{e}^{i\theta})|}{|\omega + a \mathrm{e}^{i\theta}|}\left(\ln a + |\theta-\pi|\right) a \mathrm{d}\theta \nonumber\\
&\leq& \left(\frac{\omega}{a}\right)^n \frac{a  M(f,a) (\ln a + \pi^2)}{|a-\omega|} \label{bound}
\end{eqnarray}
in which $M(f,a)$ is the maximum modulo of $f(z)$ along the contour of integration. When $\omega<a$, the remainder term vanishes in the limit as $n\rightarrow\infty$. Then equation (\ref{stel1}) follows.
\end{proof}

\subsubsection{Example} Let us reproduce a known series expansion of the exponential integral function, $E_1(z)$, which assumes the integral representation \cite[p150,\#6.2.2]{nist}
\begin{equation}\label{expi}
E_1(z)= \mathrm{e}^{-z} \int_0^{\infty} \frac{\mathrm{e}^{-x}}{z+x}\, \mathrm{d}x,\;\;\; |\mathrm{Arg}z|<\pi .
\end{equation}
By inspection, the integral involved in the representation is a Stieltjes transform of the function $f(x)=\mathrm{e}^{-x}$.  We restrict ourselves to $z=\omega>0$ and just effect an analytic continuation in the complex plane if it is desired. Since the complex extension, $\mathrm{e}^{-z}$, of the real valued function $\mathrm{e}^{-x}$ is entire, it admits an expansion given by equation (\ref{stel1}), which we now derive. 

To proceed we cut the upper limit of integration to some finite $a>0$ and then eventually take the limit $a\rightarrow\infty$. The relevant finite part integrals to compute are $\bbint{0}{a}\mathrm{e}^{-x} x^{-j-1}\mathrm{d}t$. For some $\epsilon>0$, with $0<\epsilon<a$, we obtain, after term by term integration through the series expansion of $\mathrm{e}^{-x}$, the integral
\begin{eqnarray}
\int_{\epsilon}^{a} \frac{\mathrm{e}^{-x}}{x^{j+1}} \mathrm{d}x &=& -\sum_{k=0}^{n-1} \frac{(-1)^k}{k! (j-k)} \left(\frac{1}{a^{j-k}}-\frac{1}{\epsilon^{j-k}}\right) + \frac{(-1)^j}{j!}\left(\ln a -\ln \epsilon\right)\nonumber \\
&& + \sum_{k=j+1}^{\infty} \frac{(-1)^k}{k!(k-j)} \left(a^{k-j}-\epsilon^{k-j}\right) .\label{xxx}
\end{eqnarray}
The finite part of the divergent integral $\int_0^a \mathrm{e}^{-x} x^{-j-1} \mathrm{d}x$ is obtained by dropping all terms that diverge as $\epsilon\rightarrow 0$ and keeping only the terms with a well defined limit in the said limit in equation (\ref{xxx}). Then we obtain the finite part integral
\begin{eqnarray}
\bbint{0}{a} \frac{\mathrm{e}^{-x}}{x^{j+1}}\, \mathrm{d}x &=& - \sum_{k=0}^{j-1} \frac{(-1)^k}{k! (j-k)}\frac{1}{a^{j-k}}\nonumber \\&&\hspace{10mm} + \frac{(-1)^j}{j!} \ln a + \frac{ (-1)^{j+1} a}{(j+1)!} \, _2F_2(1,1;2,j+2;-a), \label{fp1}
\end{eqnarray}
where the contributing infinite series in equation (\ref{xxx}) has been summed in terms of the hypergeometric function $_2F_2(a_1,a_2;b_1,b_2;z)$.

We now take the limit as $a$ approaches infinity. Only the last two terms contribute in the limit.  The hypergeometric function involved has a double pole, so that its relevant asymptotic expansion needed to compute the limit is given by \cite{wolfram1}
\begin{eqnarray}
_2F_2\left(a_1,a_1;b_1,b_2;z\right)& = & \frac{\Gamma(b_1) \Gamma(b_2)}{\Gamma(a_1)^2} \mathrm{e}^{z} \left(1+O(z^{-1})\right) z^{2 a_1-b_1-b_2} \nonumber \\
&& \hspace{-16mm}+ \frac{\Gamma(b_1) \Gamma(b_2)}{\Gamma(a_1) \Gamma(b_1-a_2) \Gamma(b_2-a_1)} \left[\log(-z) \left(1+O(z^{-1})\right) \right. \nonumber \\
&&\hspace{-16mm} \left. - \left(\psi(b_1-a_1) + \psi(b_2-a_1) + \psi(a_1) + 2\gamma\right)\left(1+ O(z^{-1})\right)\right],\;\; |z|\rightarrow\infty ,
\end{eqnarray}
where $\psi(z)$ is the logarithmic derivative of the gamma function and $\gamma$ is the Euler constant, with $\psi(1)=-\gamma$. Only the second term contributes in the limit.

Substituting the appropriate values of the parameters of the hypergeometric function and implementing the limit, we obtain the desired finite part integral
\begin{equation}
\bbint{0}{\infty} \frac{\mathrm{e}^{-x}}{x^{j+1}}\, \mathrm{d}x = \frac{(-1)^j}{j!} \psi(j+1), \;\;\; j=0, 1, 2, \dots .
\end{equation}
This result cannot be obtained by analytic continuation. The expansion of the Stieltjes transform of $\mathrm{e}^{-t}$ given by equation (\ref{stel1}) can now obtained and substituted back into equation (\ref{expi}) to yield the following known expansion of the exponential integral function \cite[p151,\# 6.6.3]{nist}
\begin{equation}
E_1(\omega)= - \ln \omega + \mathrm{e}^{-\omega} \sum_{j=0}^{\infty} \frac{\psi(j+1)}{j!}  \omega^j .
\end{equation}  
Absolute convergence of the infinite series is guaranteed by the asymptotic behavior $\psi(j+1)\sim \ln j$ for arbitrarily large $j$. This must be as Theorem-3 assures convergence for all $\omega<a$, which is satisfied for all $\omega$ as $a$ becomes arbitrarily large.

\subsection{Case 2:} $\nu\neq 0$ 
\begin{lemma}\label{lemmabranch}
Let $g(x)$ be integrable in the interval $[0,a]$ and that its complex extension $g(z)$ is analytic in a region containing the interval $[0,a]$ in its interior. Let $z^{-\nu}$, for $0<\nu<1$, be such that its branch cut coincides with the real positive axis. Then
	\begin{equation}
	\int_0^a x^{-\nu} g(x) \mathrm{d}x = \frac{1}{\left(\mathrm{e}^{-2\pi \nu i}-1\right)} \int_{\mathrm{C}} z^{-\nu} g(z)\, \mathrm{d}z - \frac{2\pi i}{\left(\mathrm{e}^{-2\pi \nu i}- 1\right)} \sum_k \mathrm{Res}\left[z^{-\nu} g(z)\right]_{z=z_k},
	\end{equation}
	where $\mathrm{C}$ is the contour straddling the branch cut of $z^{-\nu}$ starting from $a$ and ending at $a$ itself, and the $z_k$'s are the poles enclosed by $\mathrm{C}$, with no pole of $g(z)$ lying on the contour $\mathrm{C}$.
\end{lemma}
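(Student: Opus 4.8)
The plan is to repeat the contour argument behind Lemma~\ref{lemma0}, with the multivalued factor $\log z$ replaced by $z^{-\nu}$; the only structural change is that $z^{-\nu}$ has multiplicative rather than additive monodromy across the cut. Start from $\int_{\mathrm{C}} z^{-\nu} g(z)\,\mathrm{d}z$, with $z^{-\nu}$ on the branch that is positive just above the positive real axis. Deform $\mathrm{C}$ inward onto the contour $\mathrm{C}'$ of Figure~2 that wraps the cut tightly — running from $a$ to $\epsilon$ along the upper edge of the positive real axis, once around the circle $|z|=\epsilon$, and back from $\epsilon$ to $a$ along the lower edge. Since $z^{-\nu}g(z)$ is single-valued and analytic in the closed region swept out by this deformation apart from the poles $z_k$ of $g$ (none of which lies on $\mathrm{C}$, by hypothesis, and none on $[0,a]$, where $g$ is analytic), the residue theorem gives
\[
\int_{\mathrm{C}} z^{-\nu} g(z)\,\mathrm{d}z \;=\; \int_{\mathrm{C}'} z^{-\nu} g(z)\,\mathrm{d}z \;+\; 2\pi i \sum_k \mathrm{Res}\!\left[z^{-\nu} g(z)\right]_{z=z_k},
\]
where in the residues $z^{-\nu}$ is evaluated on the chosen sheet.

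Next I would compute $\int_{\mathrm{C}'}$ as $\epsilon\to 0^+$. On the upper edge $z^{-\nu}=x^{-\nu}$, while on the lower edge $z=x\mathrm{e}^{2\pi i}$, so $z^{-\nu}=x^{-\nu}\mathrm{e}^{-2\pi\nu i}$; the two straight pieces therefore add up to $\left(\mathrm{e}^{-2\pi\nu i}-1\right)\int_{\epsilon}^{a}x^{-\nu}g(x)\,\mathrm{d}x$. On $|z|=\epsilon$, parametrizing $z=\epsilon\mathrm{e}^{i\theta}$ and using that $g$ is analytic, hence bounded, near the origin, the integrand is $O(\epsilon^{1-\nu})$, which tends to $0$ since $0<\nu<1$; and because $x^{-\nu}g(x)$ is absolutely integrable on $[0,a]$ for $0<\nu<1$, $\int_{\epsilon}^{a}\to\int_{0}^{a}$. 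Hence $\int_{\mathrm{C}'}z^{-\nu}g(z)\,\mathrm{d}z\to\left(\mathrm{e}^{-2\pi\nu i}-1\right)\int_{0}^{a}x^{-\nu}g(x)\,\mathrm{d}x$. The residue sum is independent of $\epsilon$, so letting $\epsilon\to 0$ in the displayed identity and dividing through by $\mathrm{e}^{-2\pi\nu i}-1$, which is nonzero for $0<\nu<1$, gives the asserted formula.

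As in Lemma~\ref{lemma0}, the point needing the most care is the legitimacy of the deformation $\mathrm{C}\to\mathrm{C}'$: one must check that the interpolating contours remain in the region where $z^{-\nu}g(z)$ is holomorphic and single-valued — which is exactly why the branch cut of $z^{-\nu}$ is placed along the positive real axis, so that $\mathrm{C}$, $\mathrm{C}'$ and everything between them lie on one sheet — and that the poles thereby crossed are precisely the $z_k$ enclosed by $\mathrm{C}$, each contributing with the value of $z^{-\nu}$ at $z_k$ read off on that sheet. Everything else is the routine bookkeeping already carried out in the pole-singularity case.
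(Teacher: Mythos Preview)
Your proof is correct and follows exactly the approach the paper intends: deform $\mathrm{C}$ into the keyhole contour $\mathrm{C}'$ of Figure~2, collect the residues at the enclosed poles, use the monodromy $z^{-\nu}\mapsto \mathrm{e}^{-2\pi\nu i}z^{-\nu}$ across the cut to combine the two real segments, and let the small circle vanish as $\epsilon\to 0$. The paper's own proof is the one-line instruction to do precisely this deformation, so your write-up simply fills in the details it leaves implicit.
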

\begin{proof}
	Given the contour integral $\int_{\mathrm{C}} g(z) z^{-\nu} \, \mathrm{d}z$, deform the contour $\mathrm{C}$ into the branch cut by means of the contour $\mathrm{C}'$ in Figure-2, again eventually taking the limit $\epsilon\rightarrow 0$.
\end{proof}

\begin{theorem} Let the complex extension, $f(z)$, of $f(x)$ be entire. Then
	\begin{equation} \label{stel2}
	\int_0^a \frac{x^{-\nu} f(x)}{\omega + x} \mathrm{d}x = \sum_{j=0}^{\infty} (-1)^j \omega^j \bbint{0}{a}\frac{f(x)}{x^{j+\nu+1}}\, \mathrm{d}x + \frac{\pi f(-\omega)}{\omega^{\nu} \sin\pi\nu} ,\;\;\;0< \omega<a, \;\;\; 0<\nu<1 .
	\end{equation}
\end{theorem}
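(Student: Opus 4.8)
The plan is to transcribe the argument used for the preceding theorem (the $\nu=0$ case, equation \ref{stel1}), replacing Lemma \ref{lemma0} and the pole representation \ref{result1} by their branch-point counterparts, Lemma \ref{lemmabranch} and equation \ref{branch}. Since $f(z)$ is entire, $z^{-\nu}f(z)(\omega+z)^{-1}$ is analytic inside the contour $\mathrm{C}$ except for the simple pole at $z=-\omega$. With the branch of $z^{-\nu}$ fixed so that it is positive just above the positive real axis we have $\arg(-\omega)=\pi$, hence $\mathrm{Res}\big[z^{-\nu}f(z)(\omega+z)^{-1}\big]_{z=-\omega}=\omega^{-\nu}\mathrm{e}^{-i\pi\nu}f(-\omega)$. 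Feeding $g(z)=f(z)(\omega+z)^{-1}$ into Lemma \ref{lemmabranch} and simplifying the prefactor of the residue term by writing $\mathrm{e}^{-2\pi\nu i}-1=\mathrm{e}^{-i\pi\nu}\big(\mathrm{e}^{-i\pi\nu}-\mathrm{e}^{i\pi\nu}\big)=-2i\,\mathrm{e}^{-i\pi\nu}\sin\pi\nu$, the residue contribution collapses to $\pi f(-\omega)\,\omega^{-\nu}/\sin\pi\nu$, which is exactly the second term of equation \ref{stel2}. This isolates the missing term as the contribution of the pole of $(\omega+z)^{-1}$, now carrying the branch factor $z^{-\nu}$.

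Next I would insert the finite geometric identity \ref{expansio} for $(\omega+z)^{-1}$ into the remaining contour integral $\big(\mathrm{e}^{-2\pi\nu i}-1\big)^{-1}\int_{\mathrm{C}} z^{-\nu}f(z)(\omega+z)^{-1}\,\mathrm{d}z$. Term by term this produces $\sum_{j=0}^{n-1}(-1)^j\omega^j\big(\mathrm{e}^{-2\pi\nu i}-1\big)^{-1}\int_{\mathrm{C}} f(z)\, z^{-j-\nu-1}\,\mathrm{d}z$ plus a remainder; by equation \ref{branch} (applied with $m=j+1\ge 1$) each integral in the sum equals $\bbint{0}{a} f(x)\, x^{-j-\nu-1}\,\mathrm{d}x$, so the partial sum reproduces the first $n$ terms of the claimed series. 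In contrast with the $\nu=0$ case no $\log z$ appears in the branch-point representation of the finite part integral, so the auxiliary splitting $\log z=(\log z-i\pi)+i\pi$ used there is unnecessary and no extra residue is generated here; the identification of the finite part integrals is immediate.

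It then remains to show that the remainder $R_n(\omega)=(-1)^n\omega^n\big(\mathrm{e}^{-2\pi\nu i}-1\big)^{-1}\int_{\mathrm{C}} f(z)\, z^{-n-\nu}(\omega+z)^{-1}\,\mathrm{d}z$ tends to $0$ as $n\to\infty$ when $\omega<a$. As in the preceding proof I would deform $\mathrm{C}$ to the circle $|z|=a$, parametrize $z=a\mathrm{e}^{i\theta}$, $0\le\theta<2\pi$, use $|z^{-\nu}|=a^{-\nu}$ and $|\omega+a\mathrm{e}^{i\theta}|\ge a-\omega$, and bound $|f|$ by its maximum modulus $M(f,a)$ on the circle; this yields $|R_n(\omega)|\le \frac{2\pi a^{1-\nu}M(f,a)}{|a-\omega|\,|\mathrm{e}^{-2\pi\nu i}-1|}\left(\frac{\omega}{a}\right)^n$, which goes to $0$ since $\omega<a$. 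One should also note, exactly as in the $\nu=0$ case, that despite its appearance $R_n(\omega)$ is not a finite part integral, because the contour encloses the pole at $z=-\omega$; it is nonetheless a genuinely convergent contour integral, so the bound is legitimate. Letting $n\to\infty$ then gives equation \ref{stel2}.

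The point demanding care — more than a genuine obstacle — is the bookkeeping of the branch of $z^{-\nu}$ at $z=-\omega$ and the ensuing reduction of the residue prefactor to $\pi/\sin\pi\nu$; an incorrect phase $\mathrm{e}^{-i\pi\nu}$ would corrupt the missing term. The rest is a faithful copy of the proof of equation \ref{stel1}.
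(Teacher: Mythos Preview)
Your proposal is correct and follows essentially the same route as the paper: apply Lemma~\ref{lemmabranch} to obtain the contour representation with the residue term $\pi f(-\omega)/(\omega^{\nu}\sin\pi\nu)$, insert the finite expansion \ref{expansio}, identify each term via \ref{branch}, and bound the remainder on the circle $|z|=a$ to obtain a factor $(\omega/a)^n$. Your explicit bookkeeping of the phase $\mathrm{e}^{-i\pi\nu}$ in the residue and your remark that the $\log z$ splitting is unnecessary here are exactly the points the paper glosses over, but the argument is the same.
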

\begin{proof}
From Lemma-\ref{lemmabranch} we have the following contour integral representation of the incomplete Stieltjes transform
\begin{equation}\label{second}
\int_0^a \frac{x^{-\nu} f(x)}{\omega + x} \mathrm{d}x = \frac{\pi f(-\omega)}{\omega^{\nu} \sin\pi \nu} + \frac{1}{\left(\mathrm{e}^{-2\pi\nu i}-1\right)} \int_{\mathrm{C}} \frac{f(z)}{z^{\nu} (\omega+z)} \mathrm{d}z
\end{equation}
We insert the same expansion for $(\omega+z)^{-1}$ in the integral and use the contour integral representation of the finite part integral to obtain 
\begin{equation}
\int_0^a \frac{x^{-\nu} f(x)}{\omega + x} \mathrm{d}x = \frac{\pi f(-\omega)}{\omega^{\nu} \sin\pi \nu} +\sum_{j=0}^{n-1} (-1)^j \bbint{0}{a} \frac{f(x)}{x^{j+1+\nu}} \mathrm{d}x + R_n(\omega)
\end{equation}
where the remainder term is given by
\begin{equation}
R_n(\omega)=\frac{(-1)^n \omega^n}{(\mathrm{e}^{-2\pi\nu i}-1)} \int_{\mathrm{C}} \frac{f(z)}{z^{n+\nu} (\omega + z)}\mathrm{d}z .
\end{equation}

Again deforming the contour to be the circular path with radius $a$ and centered at the origin, we can establish the bound
\begin{equation}\label{bound2}
|R_n(\omega)|\leq \left(\frac{\omega}{a}\right)^n \frac{a M(f,a) \pi}{a^{\nu} |\omega-\nu|  |\sin\pi\nu|} .
\end{equation}
For $\omega<a$ the remainder vanishes as $n$ approaches infinity. This establishes the expansion given by equation (\ref{stel2}).

\end{proof}

\subsubsection{Example} This time let us reproduce a known series expansion of the incomplete gamma function, $\Gamma(a,z)$, which assumes the integral representation \cite[p177,\# 8.6.4]{nist}
\begin{equation}
\Gamma(\nu,z)= \frac{z^{\nu} \mathrm{e}^{-z}}{\Gamma(1-\nu)} \int_0^{\infty} \frac{x^{-\nu} \mathrm{e}^{-x}}{z+x}\mathrm{d}x, \;\; |\mathrm{Arg}z\,|<\pi, \;\; \mathrm{Re}\,\nu<1 .
\end{equation}
We recognize that the integral in the representation is the Stieltjes transform of the function $x^{-\nu} \mathrm{e}^{-x}$, which falls under equation (\ref{stel2}) with $f(z)=\mathrm{e}^{-z}$. In this example, we again make the restriction $z=\omega>0$ and further restrict ourselves to the case $0<\mathrm{Re}\, \nu<1$. Then the expansion given by equation \ref{second} applies.

We proceed as above in obtaining the required finite part integrals. For some $\epsilon$, with $0<\epsilon<a$, we have the following integral 
\begin{eqnarray}
\int_{\epsilon}^{a} \frac{\mathrm{e}^{-x}}{x^{j+1+\nu}}\mathrm{d}x &=& \sum_{k=0}^j \frac{(-1)^k}{k! (k-j-\nu)} \left(\frac{1}{a^{j+\nu-k}}-\frac{1}{\epsilon^{j+\nu-k}}\right) \nonumber \\
&& + \sum_{k=j+1}^{\infty} \frac{(-1)^k}{k! (k-j-\nu)} \left(a^{k-j-\nu}-\epsilon^{k-j-\nu}\right)
\end{eqnarray}
The finite part of the integral is again just the terms that have a finite limit as $\epsilon\rightarrow 0$. Collecting all terms with finite limits, the finite part integral, for finite $a$, is
\begin{eqnarray}
\bbint{0}{a}\frac{\mathrm{e}^{-x}}{x^{j+1+\nu}} \mathrm{d}x &=&\sum_{k=0}^j \frac{(-1)^k}{k! (k-j-\nu)} \frac{1}{a^{j+\nu-k}}\nonumber \\
&& + (-1)^{j+1} \frac{a^{1-\nu}}{(1-\nu) (j+1)!}\, _2F_2\left(1,1-\nu;2+j,2-\nu;-a\right),
\end{eqnarray}
where the sum in the second term has been expressed as a hypergeometric function.

Finally we take the limit as $a\rightarrow\infty$. The first term vanishes in the limit. The hypergeometric function has two simple poles, so its relevant asymptotic expansion in the computation of the limit is given by \cite{wolfram2}
\begin{eqnarray}
\, _2F_2(a_1,a_2;b_1,b_2;z)& = & \frac{\Gamma(b_1) \Gamma(b_2) \Gamma(a_2-a_1)}{\Gamma(a_2) \Gamma(b_1-a_1) \Gamma(b_2-a_1)} (-z)^{-a_1}\left(1+O(z^{-1})\right) \nonumber \\
&& +\frac{\Gamma(b_1) \Gamma(b_2) \Gamma(a_1-a_2)}{\Gamma(a_1) \Gamma(b_1-a_2) \Gamma(b_2-a_2)} (-z)^{-a_2}\left(1+O(z^{-1})\right) \nonumber \\
&& + \frac{\Gamma(b_1) \Gamma(b_2)}{\Gamma(a_1)\Gamma(a_2)} \mathrm{e}^{z} z^{a_1+a_2-b_1-b_2} \left(1+O(z^{-1})\right),\;\;\;\; |z|\rightarrow \infty
\end{eqnarray}
for $a_1\neq a_2$.  Only the second term contributes in the limit under the condition that $\mathrm{Re}\,\nu>0$.

Substituting the appropriate values of the parameters of the hypergeometric function and implementing the limit gives the desired finite part integral,
\begin{equation}
\bbint{0}{\infty} \frac{\mathrm{e}^{-x}}{x^{j+1+\nu}}\mathrm{d}x = (-1)^{j+1} \frac{\Gamma(1-\nu)\Gamma(\nu)}{\Gamma(j+\nu+1)} .
\end{equation}
This result can be obtained also by analytic continuation. We substitute this back into equation (\ref{stel2}) to obtain the expansion of the Stieltjes transform and use the well-known identity
$\Gamma(z) \Gamma(1-z)=\pi \csc\pi z$
to finally obtain the following known expansion for the incomplete gamma function \cite[p178,\# 8.7.3]{nist}
\begin{equation}
\Gamma(a,z) = \Gamma(a) \left[1 - \mathrm{e}^{-z} \sum_{j=0}^{\infty} \frac{z^{j+\nu}}{\Gamma(j+\nu+1)}\right] .
\end{equation} 
Notice that the series is clearly convergent everywhere. This is expected as in the limit $a\rightarrow\infty$ the upper limit $a$ becomes arbitrarily large so that the condition $\omega<a$ is satisfied for any finite $\omega$.

\subsection{Remark}
\begin{figure}
	\includegraphics[scale=0.4]{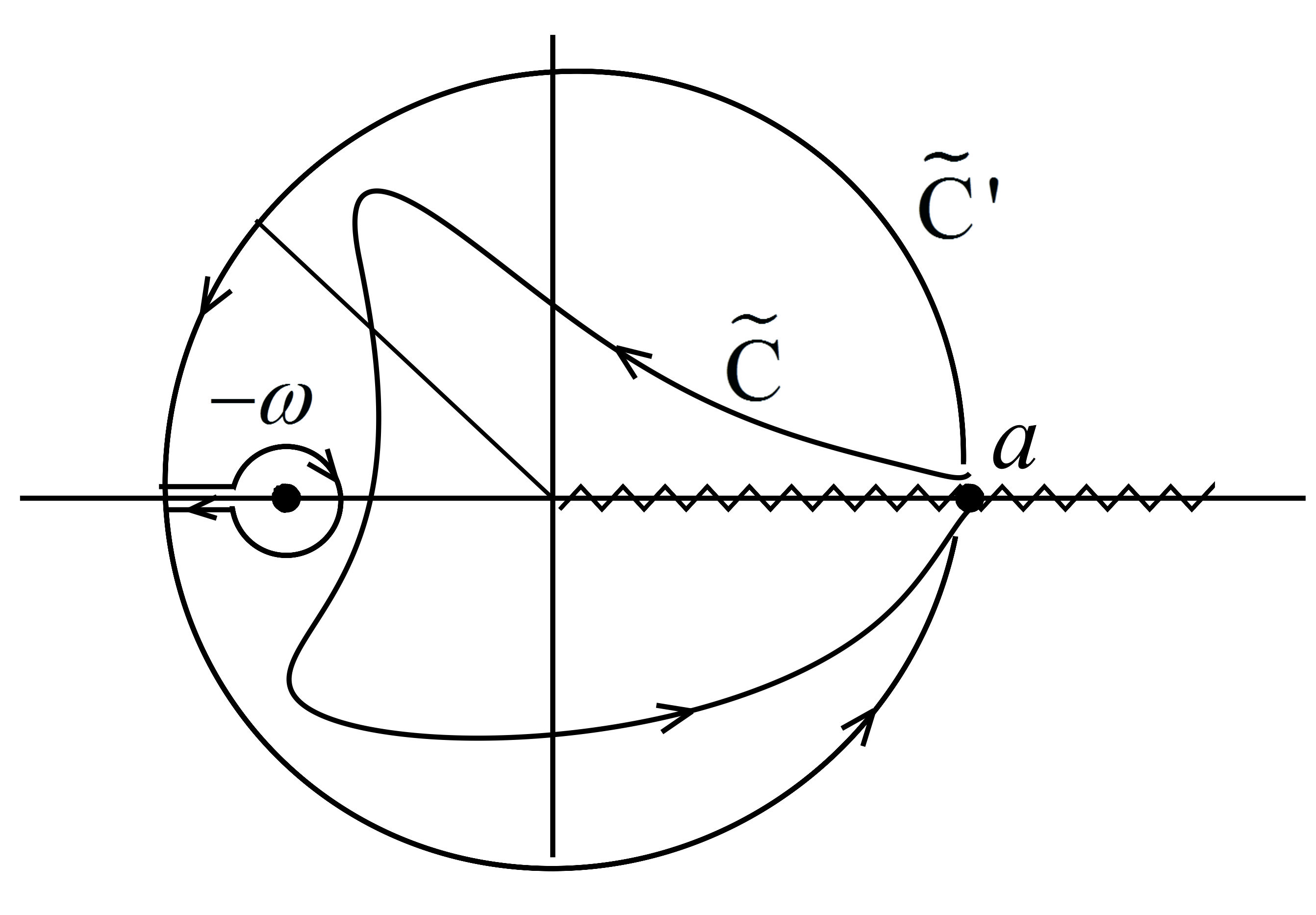}
	\caption{Missing out the pole term.}
	\label{missing}
\end{figure}

What happens if we choose a contour $\widetilde{\mathrm{C}}$ that excludes the lone pole at $z=-\omega$? It appears that under this condition the pole contribution will drop out and only the term by term integrated terms will remain in the expansion, apparently leading to a contradiction with our results above. The apparent contradiction is resolved by recognizing that the equality in equations (\ref{stel1}) and (\ref{stel2}) only hold under the condition that the remainder vanishes as the number of terms in the partial sum increases indefinitely. When the pole is excluded, the remainder will not vanish in the limit; and, when the remainder is appropriately dealt with, we will arrive at the same result.

We can demonstrate  this with the later case of $\nu\neq 0$. Using Lemma-\ref{lemmabranch} and following the same steps above, we obtain the finite expansion of the integral as follows
\begin{equation}
\int_0^a \frac{x^{-\nu} f(x)}{\omega + x} \mathrm{d}x = \sum_{j=0}^{n-1} (-1)^j \bbint{0}{a} \frac{f(x)}{x^{j+1+\nu}} \mathrm{d}x + \widetilde{R}_n(\omega),
\end{equation}
where the remainder term is given by
\begin{equation}
\widetilde{R}_n(\omega)=\frac{(-1)^n \omega^n}{(\mathrm{e}^{-2\pi\nu i}-1)} \int_{\widetilde{\mathrm{C}}} \frac{f(z)}{z^{n+\nu} (\omega + z)}\mathrm{d}z .
\end{equation}

Contradiction with the result (\ref{stel2}) arises if the remainder $\widetilde{R}_n(\omega)$ vanishes as $n\rightarrow\infty$.  But it does not vanish. We can see that by deforming the contour $\widetilde{C}$ into the circular contour $\widetilde{C}'$ depicted in Figure \ref{missing}. Along this contour, the remainder term assumes the form
\begin{equation}
\widetilde{R}_n(\omega)= \frac{(-1)^n \omega^n}{(\mathrm{e}^{-2\pi\nu i}-1)} \int_{\widetilde{\mathrm{C}_a}} \frac{f(z)}{z^{n+\nu} (\omega + z)}\mathrm{d}z - 2\pi i \mathrm{Res}\left[\frac{(-1)^n \omega^n}{(\mathrm{e}^{-2\pi\nu i}-1)} \frac{f(z)}{z^{n+\nu} (\omega + z)}\right]_{z=-\omega} .
\end{equation}
With the bound given by equation (\ref{bound2}), the first term vanishes in the limit as $n\rightarrow\infty$. On the other hand, evaluating the residue on the second term yields a term independent of $n$ and in fact equal to the contribution of the simple pole at $z=-\omega$ in equation (\ref{stel2}). That is the remainder term carries the pole contribution excluded by the contour $\widetilde{C}$, so that the result given by equation (\ref{stel2}) is recovered in the limit as $n\rightarrow\infty$. This example emphasizes the need to consider the remainder term in the application of our approach.

\section{Term by term integration at the origin for functions with holomorphic extensions}\label{origin2}
We now consider the case in which the complex extension function $f(z)$ is holomorphic in the complex plane. Let $z_1$, $z_2$, $z_3$, $\dots$ be the poles of $f(z)$, and $\zeta_0$ the greatest lower bound of the modulus of the poles, i.e. $\zeta_0=\inf\{|z_1|, |z_2|, \dots\}$. Then we have the following results. 

\begin{figure}
	\includegraphics[scale=0.4]{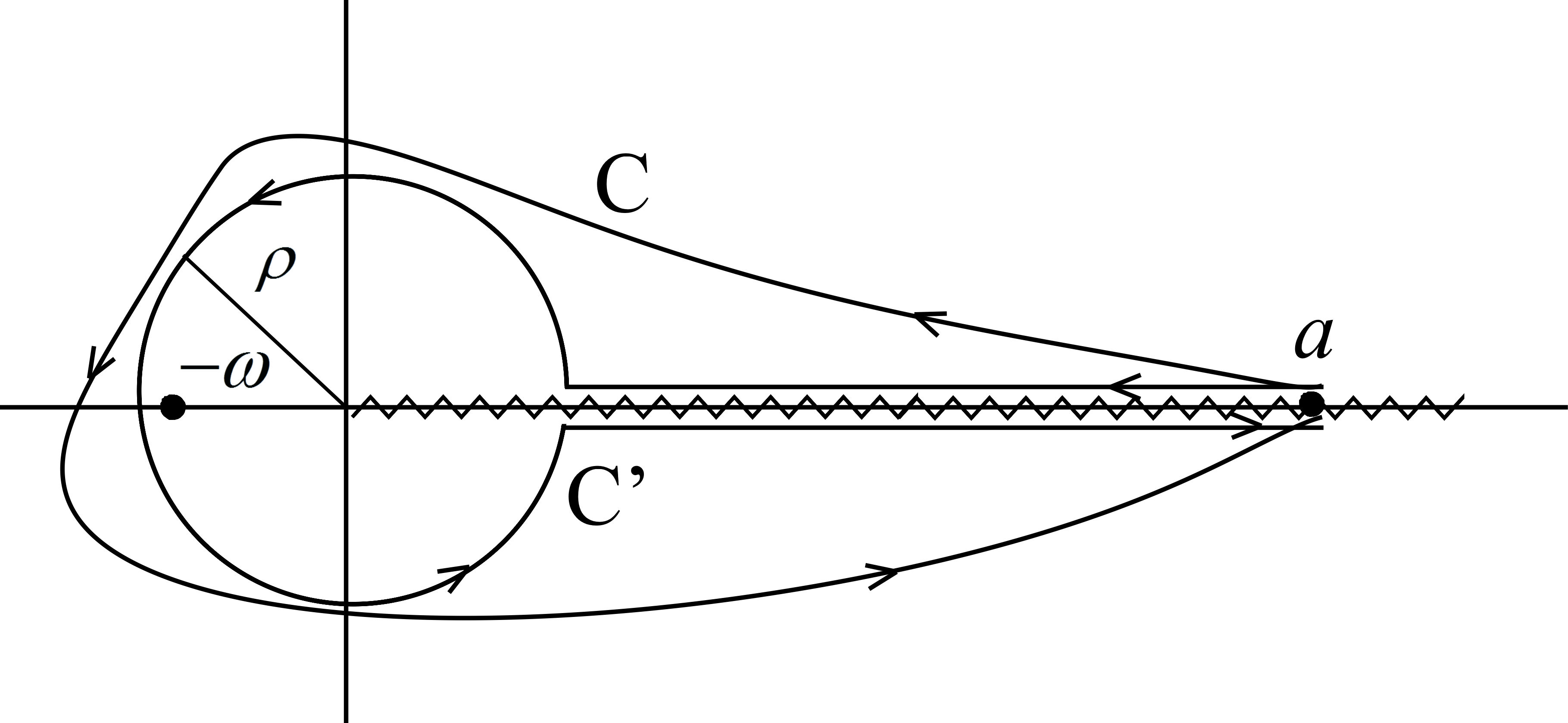}
	\caption{The contour in the holomorphic case.}
	\label{holomorphic}
\end{figure}

\begin{theorem}\label{lees}
	If $\zeta_0 \neq 0$, then for every positive $\omega<a, \zeta_0$ the following expansion holds
	\begin{equation}
		\int_0^a \frac{ f(x)}{(\omega+x)}\mathrm{d}x = \sum_{j=0}^{\infty} (-1)^j \omega^j \bbint{0}{a} \frac{f(x)}{x^{j+1}} \mathrm{d}x - f(-\omega) \ln\omega . \label{result4}
	\end{equation}
\end{theorem}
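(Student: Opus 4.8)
The plan is to run the argument that established \ref{stel1} for entire $f$ essentially verbatim, treating the poles $z_1,z_2,\dots$ of $f$ as a perturbation whose only effect is an extra finite family of residue terms in the remainder, each decaying geometrically precisely because $\omega<\zeta_0$. Concretely, I would first fix the contour $\mathrm{C}$ (Figure-\ref{holomorphic}) to be the circle $|z|=a$ modified by small indentations that exclude every pole $z_k$ with $|z_k|\le a$; this is a legitimate contour of the type used before — it straddles the branch cut of $\log z$ and meets the positive real axis only at $a$ — since $f$, as the left-hand side of \ref{result4} requires, has no poles on $[0,a]$, so no $z_k$ lies on the positive real axis. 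Applying Lemma \ref{lemma0} to $g(z)=f(z)(\omega+z)^{-1}$, the poles of $g$ enclosed by $\mathrm{C}$ reduce to the single simple pole at $z=-\omega$ (the $z_k$ are removed by the indentations, and $-\omega$ is inside $|z|=a$ because $\omega<a$), whose residue is $f(-\omega)\log(-\omega)=f(-\omega)(\ln\omega+i\pi)$. This reproduces exactly equation \ref{integral1},
\[
\int_0^a \frac{f(x)}{\omega+x}\,\mathrm{d}x=-f(-\omega)(\ln\omega+i\pi)+\frac{1}{2\pi i}\int_{\mathrm{C}}\frac{f(z)\log z}{\omega+z}\,\mathrm{d}z,
\]
now in the holomorphic setting.

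Next I would perform the term by term integration exactly as in the derivation of \ref{stel1}: insert the finite geometric sum \ref{expansio} into the contour integral, use the replacement $\log z=(\log z-i\pi)+i\pi$ together with Theorem \ref{prop1} (applicable because $\mathrm{C}$ encloses no pole of $f$) to identify $\tfrac{1}{2\pi i}\int_{\mathrm{C}} f(z)z^{-j-1}(\log z-i\pi)\,\mathrm{d}z=\bbint{0}{a} f(x)x^{-j-1}\,\mathrm{d}x$, and resum the residual $i\pi$-terms; the resummation again produces the residue $f(-\omega)$ of $(\omega+z)^{-1}f(z)$ at $z=-\omega$, which cancels the imaginary part $-i\pi f(-\omega)$ in the first term. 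One is left with
\[
\int_0^a \frac{f(x)}{\omega+x}\,\mathrm{d}x=-f(-\omega)\ln\omega+\sum_{j=0}^{n-1}(-1)^j\omega^j \bbint{0}{a}\frac{f(x)}{x^{j+1}}\,\mathrm{d}x+R_n(\omega),
\]
with $R_n(\omega)=\frac{(-1)^n\omega^n}{2\pi i}\int_{\mathrm{C}}\frac{f(z)(\log z-i\pi)}{z^n(\omega+z)}\,\mathrm{d}z$.

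The genuinely new step, and the one I expect to be the main obstacle, is the decay of $R_n$. In the entire case $\mathrm{C}$ was simply the circle $|z|=a$; here, splitting $\int_{\mathrm{C}}$ into the outer circle $|z|=a$ and the little loops around the interior poles gives
\[
R_n(\omega)=\frac{(-1)^n\omega^n}{2\pi i}\int_{|z|=a}\frac{f(z)(\log z-i\pi)}{z^n(\omega+z)}\,\mathrm{d}z-(-1)^n\omega^n\!\!\sum_{|z_k|\le a}\!\!\mathrm{Res}\!\left[\frac{f(z)(\log z-i\pi)}{z^n(\omega+z)}\right]_{z_k}.
\]
The circle term is bounded exactly as in \ref{bound} and is $O\big((\omega/a)^n\big)$, which vanishes since $\omega<a$. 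For a pole $z_k$ of order $m_k$, differentiating $z^{-n}$ shows the residue there is a polynomial in $n$ of degree $m_k-1$ times $z_k^{-n}$ times $n$-independent data, so each surviving term is $O\big(n^{m_k-1}(\omega/|z_k|)^n\big)$ and vanishes because $\omega<\zeta_0\le|z_k|$. Since only finitely many $z_k$ satisfy $|z_k|\le a$, we get $R_n(\omega)\to 0$, and passing to the limit $n\to\infty$ yields \ref{result4}. The delicate point is thus keeping the two competing decay conditions apart --- $\omega<a$ from the arc over $|z|=a$ and $\omega<\zeta_0$ from the interior poles --- which is exactly why the hypothesis reads $\omega<a,\zeta_0$; a secondary point to check is that the indented circle really is an admissible contour in Lemma \ref{lemma0} and Theorem \ref{prop1}, which rests on $f$ being pole-free on $[0,a]$ and on $\omega<\zeta_0$.
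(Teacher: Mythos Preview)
Your argument is correct, but your remainder analysis takes a genuinely different route from the paper's. The paper never lets the poles of $f$ enter the picture: it chooses $\mathrm{C}$ from the outset to be a contour that encloses $[0,a]$ and $-\omega$ but \emph{none} of the $z_k$ (possible precisely because $\omega<\zeta_0$), and then estimates $R_n$ by deforming $\mathrm{C}$ into a keyhole---a circle $\mathrm{C}_\rho$ of radius $\rho$ with $\omega<\rho<\min(a,\zeta_0)$ together with the two real segments from $\rho$ to $a$ above and below the cut. No residues are ever computed; the arc contributes $O((\omega/\rho)^n)$ by the same estimate as \eqref{bound}, and the real-line piece contributes $O((\omega/\rho)^n)$ directly. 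The hypothesis $\omega<\zeta_0$ enters only implicitly, through the existence of such a $\rho$. Your approach instead keeps the large circle $|z|=a$ and pays for it with explicit residue corrections at each interior pole, which you then kill using the Leibniz estimate $\mathrm{Res}_{z_k}=O(n^{m_k-1}|z_k|^{-n})$ together with the finiteness of $\{k:|z_k|\le a\}$. This is a bit more laborious but has the merit of making the role of $\omega<\zeta_0$ completely transparent: it is exactly what forces each residue term to decay. One terminological quibble: for poles strictly inside $|z|<a$, ``indentations'' of the circle do not exclude them---you need keyhole excisions, or equivalently to treat $\mathrm{C}$ as homologous to the circle minus small clockwise loops about the $z_k$; your subsequent splitting of $\int_{\mathrm{C}}$ shows you have the right homological picture in mind.
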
	
\begin{proof}
	Let the contour $\mathrm{C}$ be sufficiently large to accommodate the simple pole at $z=-\omega$ but not any of the poles of $f(z)$ (see Figure-\ref{holomorphic}). Then by the representation Lemma-\ref{lemma0} and by following the same steps we have used in the entire case, we arrive at the expansion
	\begin{eqnarray}
	\int_0^a \frac{f(x)}{\omega+x}\mathrm{d}x = - f(-\omega) \ln \omega + \sum_{j=0}^{n-1} (-1)^j \omega^j \bbint{0}{a} \frac{f(x)}{x^{j+1}} \mathrm{d}x + R_n(\omega) ,
	\end{eqnarray}
	where the remainder term is given by
	\begin{equation}\label{remainder2}
	R_n(\omega) = (-1)^n \omega^n \frac{1}{2\pi i} \int_{\mathrm{C}} \frac{f(z) (\log z -i\pi)}{z^n (\omega+z)} \mathrm{d}z .
	\end{equation}
	
	Under the condition that $\omega<a, \zeta_0$, there always exists some $\rho$ such that $\omega<\rho<a, \zeta_0$. We then deform the contour of integration into the contour $\mathrm{C}'$ shown in Figure-\ref{holomorphic}. Along this contour, the remainder assumes the form
	\begin{equation}\label{bla}
	R_n(\omega)= (-1)^n \omega^n \frac{1}{2\pi i} \int_{\mathrm{C}_{\rho}} \frac{f(z) (\log z -i\pi)}{z^n (\omega+z)} \mathrm{d}z + (-1)\omega^n \int_{\rho}^{a} \frac{f(x)}{x^n (\omega+x)}\mathrm{d}x .
	\end{equation}
	We can use the bound given by inequality (\ref{bound}) for the first term of equation (\ref{bla}) to obtain the bound for the remainder term
	\begin{equation}
	\left|R_n(\omega)\right| \leq \left(\frac{\omega}{\rho}\right)^n  \left[ \frac{M(f,\rho) (\ln\rho + \pi^2)}{|\rho-\omega|} + \int_{\rho}^{a} \frac{\left|f(x)\right|}{\omega + x} \mathrm{d}x\right].\label{bound3}
	\end{equation}
	Since $\omega<\rho$ the bound vanishes as $n\rightarrow \infty$. Then equation (\ref{result4}) follows.
\end{proof}

Using similar steps, we can establish the following result.

\begin{theorem}
	If $\zeta_0\neq 0$, then for every positive $\omega<a,\zeta_0$ the following expansion holds
	\begin{equation}
	\int_0^a \frac{x^{-\nu} f(x)}{(\omega+x)}\mathrm{d}x = \sum_{j=0}^{\infty} (-1)^j \omega^j \bbint{0}{a} \frac{f(x)}{x^{j+\nu+1}} \mathrm{d}x + \frac{\pi f(-\omega)}{\omega^{\nu} \sin\pi\nu} , \;\;\; 0<\nu<1 .
	\end{equation}
\end{theorem}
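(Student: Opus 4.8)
The plan is to mirror the proof of Theorem \ref{lees} essentially line for line, replacing the representation of Lemma \ref{lemma0} with that of Lemma \ref{lemmabranch} and the bound \ref{bound} with the bound \ref{bound2}. First I would choose the contour $\mathrm{C}$ in the holomorphic configuration of Figure-\ref{holomorphic}: large enough to enclose the simple pole of $(\omega+z)^{-1}$ at $z=-\omega$ (which is possible since $\omega<a$ and $\omega<\zeta_0$), but small enough to exclude every pole $z_k$ of $f(z)$. Applying Lemma \ref{lemmabranch} to $g(z)=f(z)(\omega+z)^{-1}$ with the branch of $z^{-\nu}$ positive atop the positive real axis, the only enclosed pole is $z=-\omega$, whose residue is $(-\omega)^{-\nu}f(-\omega)$; after simplifying the prefactor $-2\pi i/(\mathrm{e}^{-2\pi\nu i}-1)$ times this residue one obtains the term $\pi f(-\omega)\,\omega^{-\nu}\csc\pi\nu$, exactly as in equation \ref{second}. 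This yields the contour-integral representation
\begin{equation}
\int_0^a \frac{x^{-\nu} f(x)}{\omega+x}\,\mathrm{d}x = \frac{\pi f(-\omega)}{\omega^{\nu}\sin\pi\nu} + \frac{1}{\mathrm{e}^{-2\pi\nu i}-1}\int_{\mathrm{C}} \frac{f(z)}{z^{\nu}(\omega+z)}\,\mathrm{d}z .
\end{equation}

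Next I would insert the finite geometric expansion \ref{expansio} for $(\omega+z)^{-1}$ into the contour integral, split off the first $n$ terms, and recognize each of them, via Theorem on the branch-point representation (equation \ref{branch}), as the finite part integral $\bbint{0}{a} f(x)x^{-j-\nu-1}\mathrm{d}x$. This produces
\begin{equation}
\int_0^a \frac{x^{-\nu} f(x)}{\omega+x}\,\mathrm{d}x = \frac{\pi f(-\omega)}{\omega^{\nu}\sin\pi\nu} + \sum_{j=0}^{n-1}(-1)^j \omega^j \bbint{0}{a}\frac{f(x)}{x^{j+\nu+1}}\,\mathrm{d}x + R_n(\omega),
\end{equation}
with remainder $R_n(\omega)=\dfrac{(-1)^n\omega^n}{\mathrm{e}^{-2\pi\nu i}-1}\displaystyle\int_{\mathrm{C}}\frac{f(z)}{z^{n+\nu}(\omega+z)}\,\mathrm{d}z$. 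Note that unlike the finite part integrals themselves there is no $\log z$ insertion here, because the branch-point representation \ref{branch} carries no logarithm; this makes the bookkeeping slightly cleaner than in the $\nu=0$ case.

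The crux, as in Theorem \ref{lees}, is showing $R_n(\omega)\to 0$. Since $f(z)$ is only holomorphic, I cannot deform $\mathrm{C}$ all the way to a circle of radius $a$ without crossing a pole of $f$; instead, choosing $\rho$ with $\omega<\rho<\min(a,\zeta_0)$, I would deform $\mathrm{C}$ into a circle $\mathrm{C}_\rho$ of radius $\rho$ together with the two straight segments from $\rho$ to $a$ straddling the cut, exactly as in the passage from \ref{remainder2} to \ref{bla}. On $\mathrm{C}_\rho$ the factor $z^{-n}$ contributes $\rho^{-n}$ and the bound \ref{bound2} (with $a$ replaced by $\rho$) gives a term of size $(\omega/\rho)^n$ times a constant; on the straight segments, $z^{-n}=x^{-n}$ with $x\ge\rho$ contributes $\rho^{-n}$ times $\int_\rho^a |f(x)|(\omega+x)^{-1}\mathrm{d}x$. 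Hence $|R_n(\omega)|\le (\omega/\rho)^n\big[\text{const}\big]\to 0$ since $\omega<\rho$, and taking $n\to\infty$ yields the claimed expansion. The main obstacle is purely the convergence estimate in the holomorphic setting — verifying that the deformed contour picks up no pole of $f(z)$ (guaranteed by $\rho<\zeta_0$) and that the contributions along the new segments decay — but this is handled by the same device already used for Theorem \ref{lees}, so no genuinely new difficulty arises.
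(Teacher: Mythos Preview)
Your proposal is correct and follows exactly the route the paper intends: the paper itself offers no detailed proof of this theorem, merely stating ``Using similar steps, we can establish the following results,'' and you have spelled out precisely those steps --- replacing Lemma~\ref{lemma0} by Lemma~\ref{lemmabranch}, the logarithmic contour-integral representation (equation~\ref{result1}) by the branch-point representation (equation~\ref{branch}), and the remainder bound~\ref{bound} by~\ref{bound2}, while retaining the deformation to $\mathrm{C}_\rho$ plus straight segments from Theorem~\ref{lees}. The residue computation, the identification of the finite-part integrals, and the remainder estimate are all handled correctly.
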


\section{Term by term integration at infinity}\label{infinity}
Here we apply our method to a problem that has been treated with the distributional approach \cite{wong,distribution,mcwong}. We will only demonstrate how the method can be applied when expansions are made at infinity. A detailed exposition of the relationship between our approach and the distributional approach to asymptotics will be considered elsewhere. Let us consider the integral
\begin{equation}\label{exampleinf}
I(\omega)= \int_0^{\infty} \frac{1}{(1+x)^{\nu} (\omega + x)} \mathrm{d}x, \;\;\; 0<\nu<1, \;\;\; 0<\omega .
\end{equation}
We wish to obtain the asymptotic expansion for $\omega\rightarrow\infty$. The problem discussed in the introduction is a special case of this integral for $\nu=1/3$. We leave the remainder analysis to the reader.

The first step is to obtain a contour integral representation of the integral given by equation \ref{exampleinf}. The complex extension function $(1+z)^{-\nu} (\omega+z)^{-1}$ has a branch point at $z=-1$. We choose the branch of $(1+z)^{-\nu}$ such that its branch cut is the line $[-1,\infty)$. Furthermore, we choose the contour given in Figure-\ref{atinfty}. Since $\omega$ is arbitrarily large, the pole at $z=-\omega$ is outside this contour. 

\begin{figure}
	\includegraphics[scale=0.4]{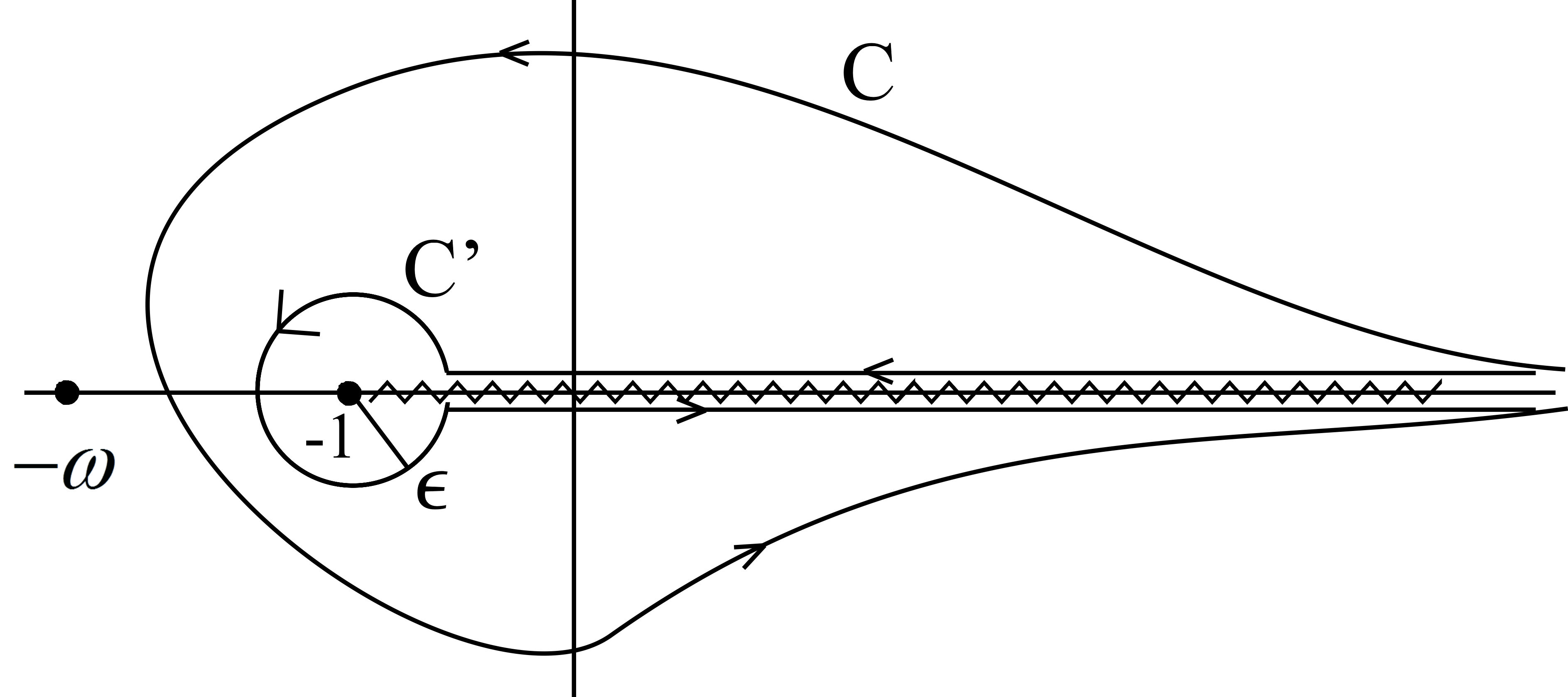}
	\caption{The contour of integration for a term by term integration at infinity.}
	\label{atinfty}
\end{figure}

By deforming the contour into the branch cut of $(1+z)^{-\nu}$, we obtain the contour integral representation of the given integral,
\begin{eqnarray}
\int_0^{\infty} \frac{1}{(1+x)^{\nu}(\omega+x)}\mathrm{d}x &=& -\int_0^1 \frac{1}{(1-x)^{\nu} (\omega-x)}\mathrm{d}x \nonumber \\
&&\hspace{12mm} + \frac{1}{\left(\mathrm{e}^{-2\pi \nu i }-1\right)} \int_{\mathrm{C}} \frac{1}{(1+z)^{\nu} (\omega+z)} \mathrm{d}z\label{x4}
\end{eqnarray}
The first integral exists because $\omega$ is arbitrarily large, i.e. $\omega>>1$. This time we use the expansion
\begin{equation}
\frac{1}{(1+z)^{\nu}}=\sum_{s=0}^{\infty} \binom{-\nu}{s} \frac{1}{z^{s+\nu}},
\end{equation}
valid for $|z|>1$.

We substitute this expansion back into the second term of equation (\ref{x4}) and make the necessary identification for the finite part integral to obtain the expansion
\begin{eqnarray}
\int_0^{\infty} \frac{1}{(1+x)^{\nu}(\omega+x)}\mathrm{d}x &=& -\int_0^1 \frac{1}{(1-x)^{\nu} (\omega-x)}\mathrm{d}x \nonumber \\
&&\hspace{12mm} + \sum_{s=0}^{\infty} \binom{-\nu}{s} \bbint{0}{\infty} \frac{1}{(\omega+x) x^{s+\nu}} \mathrm{d}x .\label{quay}
\end{eqnarray}
Observe that the second term is just the term by term integration of the expansion, with the integration interpreted as a finite part integral. The first term represents the contribution coming from the branch point at $z=-1$. The expansion in the second term does not have an accompanying pole term because the pole at $z=-\omega$ is outside of the contour of integration.

Let us calculate the indicated finite part integral. For $0<\epsilon<a<\infty$, we have the convergent integral
\begin{eqnarray}
\int_{\epsilon}^a \frac{1}{(\omega+x) x^{s+\nu}} &=& \int_{\epsilon}^{a} \frac{1}{x^{s+\nu}} \left[\sum_{j=0}^{s-1} (-1)^j\frac{x^j}{\omega^{j+1}} + \frac{(-1)^s}{\omega^s} \frac{x^s}{(\omega+x)}\right]  \mathrm{d}x \nonumber\\
&=& - \sum_{j=0}^{s-1} \frac{1}{(s+\nu-j-1) \omega^{j+1}} \left[\frac{1}{a^{s+\nu-j-1}}-\frac{1}{\epsilon^{s+\nu-j-1}}\right] \nonumber \\
&& + \frac{(-1)^s}{\omega^s} \int_{\epsilon}^{a} \frac{1}{x^{\nu} (\omega + x)}\mathrm{d}x 
\end{eqnarray}
The finite part for $a<\infty$ is obtained by dropping the terms that diverge as $\epsilon\rightarrow 0$, and keeping only those terms with a well defined limit in the said limit. Finally taking the limit as $a\rightarrow\infty$ yields the desired finite part integral
\begin{eqnarray}
\bbint{0}{\infty} \frac{1}{(\omega+x) x^{s+\nu}} \mathrm{d}x = \frac{(-1)^s}{\omega^s}\int_0^{\infty} \frac{1}{x^{\nu} (\omega+x)}\mathrm{d}x = \frac{(-1)^s \pi}{\omega^{s+\nu} \sin\pi\nu} .
\end{eqnarray}
Observe that the finite part is equal to the analytic value of the divergent integral.

Finally the first integral can be expanded by expanding $(\omega-x)^{-1}$ at $x=0$,  and then integrating the infinite series term by term. Here no special treatment is needed for the term by term integration because no divergent integral is involved, and the infinite series for the expansion is absolutely convergent in the entire interval of integration for $\omega>>1$. By substitution of the expansion of the first term and the finite part integrals back in equation \ref{quay}, we obtain the desired expansion,
\begin{eqnarray}
\int_0^{\infty} \frac{1}{(1+x)^{\nu} (\omega + x)} \mathrm{d}x &=&  - \sum_{s=0}^{\infty} \frac{\Gamma(s+1)  \Gamma(1-\nu)}{\Gamma(s-\nu+2)} \frac{1}{\omega^{s+1}} \nonumber \\
&& \hspace{12mm} + \frac{\pi}{\sin\pi\nu} \sum_{s=0}^{\infty} (-1)^s \binom{-\nu}{s} \frac{1}{\omega^{s+\nu}}, \;\;\; \omega\rightarrow \infty . \label{infinite}
\end{eqnarray}
Both infinite series in equation \ref{infinite} converge for sufficiently large $\omega$, in particular for $\omega>1$; and the indicated equality is an exact equality not mere asymptotic equality. We recover from equation (\ref{infinite}) the result for $\nu=1/3$ given earlier in equation (\ref{correct}). 

\section{Conclusion}

In this paper we tackled the problem of evaluating the Stieltjes transform by means of expanding the integrand at infinity and then evaluating the series term by term. The interchange of order of integration and summation led to an infinite series whose terms are divergent integrals. We have seen that committing ourselves to a particular interpretation of the divergent integrals and naively assigning them values in accordance with the chosen interpretation will generally lead to missing terms. We have seen that the resolution to this problem lies in the precise relationship between the assigned values of the divergent integrals and the Stieltjes integral. Here we have chosen to assign the divergent integrals the value equal to their finite parts. The relationship between the finite part integral and the Stieltjes transform is established by lifting the integration in the complex plane, in particular, by representing them as complex contour integrals with common contours of integration. The missing terms are recovered in the process and found to arise from the poles and branch points of the integrand in the complex plane. Moreover, the result is not a mere asymptotic equality but an exact analytic equality, demonstrating that divergent integrals, specifically their finite parts, can in fact be used to evaluate convergent integrals.

In hindsight, we have not only addressed the question of the origin of missing terms but in the process have uncovered the method of finite part integration as an alternative integration method of evaluating convergent integrals.  In general, some uniformity conditions must be satisfied in order for integration and summation to be interchanged. If the interchange leads to divergent integrals, then the necessary conditions are not satisfied and the interchange is not allowed. Finite part integration shows us how to proceed with the interchange when the interchange leads to a finite or infinite sum  of divergent integrals to obtain the exact analytic value of a given convergent integral. To perform finite part integration on a given convergent integral, one generally has to proceed as follows: Determine the divergent integrals that arise after expanding the integrand of the given integral and performing a term by term integration; obtain the finite parts of the divergent integrals; obtain the complex contour integral representation of the finite parts; represent the given integral as a complex contour integral using the same contour of integration as the finite part integrals; perform the expansion of the integrand under the contour integral of the given (convergent) integral; proceed with the term by term integration. If the expansion of the integrand has a fixed finite number of terms, finite part integration may proceed without complication.  
But a non-trivial consideration of the remainder term is necessary when the expansion of the integrand leads to an infinite series. Elsewhere we will develop further finite part integration, and explore its applications to asymptotics and to perturbative solutions to differential or integral equations involving infinite series whose terms are divergent integrals.



\begin{thebibliography}{00}
	\bibitem{wong} R. Wong {\it Asymptotic Approximations of Integrals}  SIAM (2001).
	
	\bibitem{analytic} G. Liebbrant, ``Introduction to the technique of dimensional regularization,'' {\it Rev. Mod. Phys.} {\bf 47} 849-876 (1975).
	
	\bibitem{caesaro} A. Laforgia, ``A theory of diverent integrals,'' {\it App. Math. Lett.} {\bf 22}, 834-840 (2009).
	
	\bibitem{regularization} Shiekh, A.Y. {\it Zeta-function regularization of quantum field theory} Can. J. Phys. {\bf 68} 620 (1990).
	
	\bibitem{distribution} R. Wong, ``Distribution derivation of an asymptotic expansion,'' \textit{Proc. Amer. Math. Soc.} \textbf{80} (1980) 266.
	
	\bibitem{monegato} G. Monegato {\it Definitions, properties and applications of finite-part integrals} Journal of Computational and Applied Mathematics {\bf 229}, 425–439 (2009).
	
	\bibitem{others1} E.R. Canaiannello, ``Generalized integration procedure for divergent integrals,'' {\it II Nuovo Cimento A.} {\bf 15} 145-161 (1973).
	
	\bibitem{others2} B.W. Ninham, ``Generalized functions and divergent integrals,'' {\it Numerische Mathematik} {\bf 8} 444--457 (1966).
	
	\bibitem{others3} O. Costin and H.M. Friedman, ``Foundational aspects of divergent integrals,'' {\it J. Func. Anal.} {\bf 267} 4732–4752 (2014).
	
	\bibitem{mcwong} J.P. McClure, R. Wong, \lq\lq Explicit error terms for asymptotic expansions of Stieltjes transforms,\rq\rq \textit{J. Inst. Maths Applics} \textbf{3}(1)(1978) 129.
	
	\bibitem{fox} C. Fox {\it A generalization of the Cauchy principal value} Canad. J. Math. {\bf 9} 110-117 (1957).
	
	\bibitem{galapon} E.A. Galapon, ``The Cauchy principal value and the Hadamard finite part integral as values of absolutely convergent integrals,''  {\it J. Math. Phys.} {\bf 57}, 033502 (2016).
	
	\bibitem{kanwal} R. Estrada and R.P. Kanwal {\it Singular Integral Equations } Springer (2000).
	
	\bibitem{hadamard} J. Hadamard, Lectures on Cauchy's Problem in Linear Partial Differential Equations (1952).
	
	\bibitem{nist} F.W.J. Olver, D.W. Lozier, R.F. Boisvert, C.W. Clark, eds. {\it NIST Handbook of Mathematical Functions} Cambridge University Press (2010).
	
	\bibitem{monegato2} G. Monegato1 and J. N. Lyness, ``The Euler-Maclaurin expansion and finite-part integrals,'' Numer. Math. {\bf 81}, 273–291 (1998).
	
	\bibitem{costin} O. Costin and X. Xia,``From the Taylor series of analytic functions to their global analysis,'' {\it Nonlinear Analysis: Theory, Methods and Applications} {\bf 119}, 106–114 (2015).
	
	\bibitem{renormalization} B. Delamotte, ``A hint of renormalization,'' {\it Am. J. Phys.} {\bf 72}, 170-184 (2004).
	
	\bibitem{wolfram1} http://functions.wolfram.com/07.25.06.0011.01
	
	\bibitem{wolfram2} http://functions.wolfram.com/07.25.06.0009.01
	
\end{thebibliography}
\end{document}